\newcommand\IN{{\mathbb N}}
\newcommand\IZ{{\mathbb Z}}
\newcommand{\integers}{\IZ}
\newcommand{\contractname}{\mathsf{Cnt}}
\newcommand{\contractstar}[1]{\ensuremath{\contractname^*\left(#1\right)}}
\newcommand{\contract}[1]{\ensuremath{\contractname\left(#1\right)}}
\newcommand{\contracti}[2]{\ensuremath{\contractname^{#2}\left(#1\right)}}
\newcommand{\contraction}[1]{\ensuremath{\mathsf{Contraction}\left(#1\right)}}
\newcommand{\effect}[2]{\ensuremath{\mathsf{Effect}\left(#1,#2\right)}}
\newcommand{\src}[1]{\ensuremath{\mathsf{src}\left(#1\right)}}
\newcommand{\trg}[1]{\ensuremath{\mathsf{trg}\left(#1\right)}}
\newcommand{\boundleq}{\ensuremath{|\loc|\cdot (2^{(|\edges |+ 1)}+1)}}
\newcommand{\tpof}[1]{\ensuremath{\mathsf{TPath}\left(#1\right)}}
\newcommand{\rof}[2]{\ensuremath{\mathsf{Run}\left(#1,#2\right)}}
\newcommand{\loc}{\ensuremath{\mathrm{Loc}}}
\newcommand{\rmax}{\ensuremath{\mathrm{rmax}}}
\newcommand{\cmax}{\ensuremath{\mathrm{cmax}}}
\newcommand{\edges}{\ensuremath{\mathrm{Edges}}}
\newcommand{\rate}{\ensuremath{\mathrm{Rates}}}
\newcommand{\invariants}{\ensuremath{\mathrm{Inv}}}
\newcommand{\init}{\ensuremath{\mathrm{Init}}}
\newcommand{\initval}{\ensuremath{\vec{0}}}
\newcommand{\goal}{\ensuremath{\mathrm{Goal}}}
\newcommand{\rateon}[1]{\ensuremath{\mathcal{R}\left(#1\right)}}
\newcommand{\updateon}[1]{\ensuremath{\mathcal{U}\left(#1\right)}}
\newcommand{\guardson}[1]{\ensuremath{\mathcal{G}\left(#1\right)}}
\newcommand{\duration}[1]{\ensuremath{\mathsf{duration}\left(#1\right)}}
\newcommand{\first}[1]{\ensuremath{\mathsf{first}\left(#1\right)}}
\newcommand{\last}[1]{\ensuremath{\mathsf{last}\left(#1\right)}}
\newcommand{\len}[1]{\ensuremath{\left|#1\right|}}
\newcommand{\dreset}[1]{\ensuremath{\mathsf{DetReset}\left(#1\right)}}
\newcommand{\strict}[1]{\ensuremath{\mathsf{Strict}\left(#1\right)}}
\newcommand{\cbound}[1]{\ensuremath{\mathsf{CBound}\left(#1\right)}}
\newcommand{\adapt}[2]{\ensuremath{\mathsf{Adapt}\left(#1,#2\right)}}
\newcommand{\cH}{\ensuremath{\mathcal{H}}}
\newcommand{\posreal}{\ensuremath{\mathbb{R}^+}}
\newcommand{\reals}{\ensuremath{\mathbb{R}}}
\newcommand{\false}{\ensuremath{\mathbf{false}}}
\newcommand{\true}{\ensuremath{\mathbf{true}}}
\newcommand{\tb}{\ensuremath{\mathbf{T}}}
\newcommand{\thebound}{\ensuremath{\mathbf{B}}}
\newcommand{\finalbound}{\ensuremath{2|X|+(2|X|+1)\cdot \boundleq}}
\newcommand{\tick}{\ensuremath{{\sf tick}}\xspace}
\newcommand{\atick}{\ensuremath{{\cal A}_{\tick}}\xspace}
\def\abs#1{\ensuremath{\lvert #1\rvert}}
\newcommand{\nat}{\IN}
\renewcommand{\H}{\mathcal{H}}
\renewcommand{\l}{\ell}
\def\break{\penalty-1000}
\newcommand{\val}{\ensuremath{\nu}}
\newtheorem{problem}{Problem}
\newtheorem{proposition}{Proposition}
\newtheorem{lemma}{Lemma}
\newtheorem{theorem}{Theorem}
\newenvironment{proof}{\noindent{\it Proof.}\hspace*{.3cm}}{}
\newcommand{\qed}{\hfill$\Box$}
\title{On Reachability for Hybrid Automata\\ over Bounded
  Time\thanks{Work supported by the projects: $(i)$ QUASIMODO (FP7-
    ICT-STREP-214755), Quasimodo: ``Quantitative System Properties in
    Model-Driven-Design of Embedded'', {\tt
      http://www.quasimodo.aau.dk/}, $(ii)$ GASICS (ESF-EUROCORES
    LogiCCC), Gasics: ``Games for Analysis and Synthesis of
    Interactive Computational Systems'', {\tt
      http://www.ulb.ac.be/di/gasics/}, $(iii)$ Moves: ``Fundamental
    Issues in Modelling, Verification and Evolution of Software'',
    {\tt http://moves.ulb.ac.be}, a PAI program funded by the Federal
    Belgian Government, $(iv)$ the ARC project
    AUWB-2010--10/15-UMONS-3, $(v)$ the FRFC project 2.4515.11 and
    $(vi)$ a grant from the National Bank of Belgium.}  }
\author{Thomas Brihaye\thanks{Universit\'e de Mons, Belgium} \and
  Laurent Doyen\thanks{LSV, ENS Cachan \& CNRS, France} \and 
  Gilles Geeraerts\thanks{Universit\'e Libre de Bruxelles, Belgium} \and 
  Jo\"el Ouaknine\thanks{Oxford University Computing Laboratory, UK} \and
  Jean-Fran\c{c}ois Raskin\footnotemark[3] \and James Worrell\footnotemark[4]}
\begin{document}
\maketitle

\begin{abstract}
  This paper investigates the time-bounded version of the reachability
  problem for hybrid automata. This problem asks whether a given
  hybrid automaton can reach a given target location within $\tb$ time
  units, where $\tb$ is a constant rational value. We show that, in
  contrast to the classical (unbounded) reachability problem, the
  timed-bounded version is \emph{decidable} for rectangular hybrid
  automata provided only non-negative rates are allowed.  This class
  of systems is of practical interest and subsumes, among others, the
  class of stopwatch automata.  We also show that the problem becomes
  undecidable if either diagonal constraints or both negative and
  positive rates are allowed.
\end{abstract}

\section{Introduction}

The formalism of hybrid automata~\cite{AlurCHHHNOSY95} is a
well-established model for hybrid systems whereby a digital controller
is embedded within a physical environment. The state of a hybrid
system changes both through discrete transitions of the controller,
and continuous evolutions of the environment.  The discrete state of
the system is encoded by the \emph{location} $\ell$ of the automaton,
and the continuous state is encoded by \emph{real-valued variables} $X$ 
evolving according to dynamical laws constraining the
first derivative $\dot{X}$ of the variables. Hybrid automata have
proved useful in many applications, and their analysis is supported by
several tools~\cite{HyTech,Phaver}.

A central problem in hybrid-system verification is the {\em reachability problem} 
which is to decide if there exists an
execution from a given initial location $\ell$ to a given goal
location $\ell'$.  While the reachability problem is undecidable for
simple classes of hybrid automata (such as linear hybrid
automata~\cite{AlurCHHHNOSY95}), the decidability frontier of this
problem is sharply understood~\cite{HenzingerKPV98,HenzingerR00}.  For
example, the reachability problem is decidable for the class of 
initialized rectangular automata where (i)~the flow constraints,
guards, invariants and discrete updates are defined by rectangular
constraints of the form $a \leq \dot{x} \leq b$ or $c \leq x \leq d$
(where $a,b,c,d$ are rational constants),
and~(ii)~whenever the flow constraint of a variable $x$ changes
between two locations $\ell$ and $\ell'$, then $x$ is reset along the
transition from $\ell$ to $\ell'$.  Of particular interest is the
class of timed automata which is a special class of
initialized rectangular automata~\cite{AlurD94}.

In recent years, it has been observed that new decidability results
can be obtained in the setting of time-bounded verification of
real-time systems~\cite{OuaknineRW09,OuaknineW10}. Given a time bound
$\tb \in \IN$, the time-bounded verification problems consider only
traces with duration at most $\tb$. Note that due to the density of
time, the number of discrete transitions may still be unbounded.
Several verification problems for timed automata and real-time
temporal logics turn out to be decidable in the time-bounded framework
(such as the language-inclusion problem for timed
automata~\cite{OuaknineRW09}), or to be of lower complexity (such as
the model-checking problem for {\sf MTL}~\cite{OuaknineW10}).  The
theory of time-bounded verification is therefore expected to be more
robust and better-behaved in the case of hybrid automata as well.

Following this line of research, we revisit the reachability problem
for hybrid automata with time-bounded traces.  The {\em time-bounded
  reachability problem} for hybrid automata is to decide, given a time
bound $\tb \in \IN$, if there exists an execution of duration less
than $\tb$ from a given initial location $\ell$ to a given goal
location $\ell'$.  We study the frontier between decidability and
undecidability for this problem and show how bounding time alters matters
with respect to the classical reachability problem. 
In this paper, we establish the following results. First, we
show that the time-bounded reachability problem is \emph{decidable}
for non-initialized rectangular automata when 
only positive rates are allowed\footnote{This class is interesting from
  a practical point of view as it includes, among others, the class of
  stopwatch automata~\cite{CassezL00}, for which unbounded
  reachability is undecidable.}. The proof of this fact is technical
and, contrary to most decidability results in the field, does not rely
on showing the existence of an underlying finite (bi)simulation
quotient.  We study the properties of time-bounded runs and show that
if a location is reachable within $\tb$ time units, then it is
reachable by a timed run in which the number of discrete transitions
can be bounded.  This in turn allows us to reduce the time-bounded
reachability problem to the satisfiability of a formula in the
first-order theory of real addition, decidable in
\textsf{EXPSPACE}~\cite{FR75}.

Second, we show that the time-bounded reachability problem is
\emph{undecidable} for non-initialized rectangular hybrid
  automata if both positive and negative rates are allowed.
Third, we show that the time-bounded reachability problem is
\emph{undecidable} for initialized rectangular hybrid automata
with positive singular flows if diagonal constraints in guards are allowed. 
These two undecidability results allow to precisely
characterize the boundary between decidability and undecidability.

The undecidability results are obtained by reductions from the halting
problem for two-counter machines. We present novel encodings of the
execution of two-counter machines that fit into time-bounded
executions of hybrid automata with either negative rates, or diagonal
constraints.


\section{Definitions}

Let ${\cal I}$ be the set of intervals of real numbers with endpoints in
$\integers\cup\{-\infty,+\infty\}$.
Let $X$ be a set of continuous variables, and let $X' = \{x' \mid x \in X\}$ 
and $\dot{X} = \{\dot{x} \mid x \in X\}$
be the set of primed and dotted variables, corresponding respectively 
to variable updates and first derivatives.
A \emph{rectangular constraint} over $X$ is an expression of the 
form $x\in I$ where $x$ belongs to $X$ and $I$ to ${\cal I}$.
A \emph{diagonal constraint} over $X$ is a constraint of the form $x-y\sim c$ where
$x,y$ belong to $X$, $c$ to $\integers$, and $\sim$ is in $\{ <, \le , =, \ge, > \}$.
Finite conjunctions of diagonal and rectangular
constraints over $X$ are called {\em guards},
over $\dot{X}$ they are called {\em rate constraints},
and over $X \cup X'$ they are called {\em update constraints}.
A guard or rate constraint is {\em rectangular} if 
all its constraints are rectangular. An update constraint
is {\em rectangular} if all its constraints are either rectangular 
or of the form $x=x'$.
We denote by $\guardson{X}$, $\rateon{X}$, $\updateon{X}$ respectively
the sets of guards, rate constraints, and update constraints over $X$.


\paragraph{{Linear hybrid automata.}}
A \emph{linear hybrid automaton} (LHA) is a tuple
${\mathcal{H}} = (X, \loc, \break \edges, \rate, \break \invariants, \init)$ where
$X=\{x_1,\ldots, x_{|X|}\}$ is a finite set of continuous
\emph{variables}%
; $\loc$ is a finite set of \emph{locations}; $\edges \subseteq \loc
\times \guardson{X} \times \updateon{X} \times \loc$ is a finite set
of \emph{edges}; $\rate:\loc \mapsto\rateon{X}$ assigns to each
location a constraint on the \emph{possible variable rates};
$\invariants: \loc\mapsto\guardson{X}$ assigns an \emph{invariant} to
each location; and $\init\in \loc$ is an \emph{initial location}. For
an edge $e=(\ell, g, r, \ell')$, we denote by $\src{e}$ and $\trg{e}$
the location $\ell$ and $\ell'$ respectively, $g$ is called the
\emph{guard} of $e$ and $r$ is the \emph{update} (or \emph{reset}) of
$e$. In the sequel, we denote by $\rmax$ the maximal constant
occurring in the constraints of $\{\rate(\ell)\mid \ell\in\loc\}$

A LHA $\cH$ is \emph{singular} if for all locations $\ell$ and for all
variables $x$ of $\cH$, the only constraint over $\dot{x}$ in
$\rate(\ell)$ is of the form $\dot{x} \in I$ where $I$ is a singular
interval;
it is \emph{fixed rate} if for all variables $x$ of $\cH$ there exists
$I_x\in {\cal I}$ such that for all locations $\ell$ of $\cH$, the only
constraint on $\dot{x}$ in $\rate(\ell)$ is the constraint $\dot{x}
\in I_x$.
It is \emph{multirate} if it is not fixed rate.
It is \emph{non-negative rate} if for all variables $x$, for all
locations $\ell$, the constraint $\rate(\ell)$ implies that $\dot{x}$
must be non-negative.

\paragraph{{Rectangular hybrid automata.}}
A \emph{rectangular hybrid automaton} (RHA) is a linear hybrid automaton 
in which all guards, rates, and invariants are rectangular. 
In this case, we view each reset $r$ as a function $X'\mapsto {\cal I}
\cup\{\bot\}$ that associates to each variable $x \in X$ either an
interval of possible reset values $r(x)$, or $\bot$ when the value of
the variable $x$ remains unchanged along the transition.
When it is the case that $r(x)$ is either $\bot$ or a singular interval for each $x$, we
say that $r$ is \emph{deterministic}. 
In the case of RHA, we can also view rate constraints as functions  
$\rate : \loc \times X \rightarrow {\cal I}$ that associate to each location $\ell$ and each variable $x$ an interval of possible rates $\rate(\ell)(x)$. 
A rectangular hybrid automaton $\cH$
is \emph{initialized} if for every edge $(\ell, g, r, \ell')$ of $\cH$, 
for every $x\in X$, if $\rate(\ell)(x)\neq\rate(\ell')(x)$ then
$r(x)\neq\bot$, i.e., every variable whose rate constraint is changed must be reset. 

\paragraph{{LHA semantics.}}
A \emph{valuation} of a set of variables $X$ is a function
$\val:X\mapsto\reals$.  We further denote by $\initval$ the valuation
that assigns $0$ to each variable.

Given an LHA $\cH=(X,\loc,\edges,\rate,\invariants,\init, X)$, a
\emph{state} of $\cH$ is a pair $(\ell, \val)$, where $\ell\in \loc$
and $\val$ is a valuation of $X$. The semantics of $\cH$ is defined as
follows. Given a state $s=(\ell,\val)$ of $\cH$, an \emph{edge step}
$(\ell,\val)\xrightarrow{e}(\ell',\val')$ can occur and change the
state to $(\ell',\val')$ if $e=(\ell, g, r, \ell') \in \edges$,
$\val\models g$, $\val'(x)=\val(x$) for all $x$ s.t. $r(x)=\bot$, and
$\val'(x)\in r(x)$ for all $x$ s.t. $r(x)\neq\bot$; given a time delay
$t\in\posreal$, a \emph{continuous time step}
$(\ell,\val)\xrightarrow{t}(\ell,\val')$ can occur and change the
state to $(\ell,\val')$ if there exists a vector $r=(r_1,\ldots
r_{|X|})$ such that $r\models\rate(\ell)$, $\val'=\val+(r\cdot t)$,
and $\val + (r \cdot t') \models \invariants(\ell)$ for all $0 \leq t'
\leq t$.

A \emph{path} in $\cH$ is a finite sequence $e_1, e_2, \ldots,
e_n$ of edges such that $\trg{e_i} = \src{e_{i+1}}$ for all $1\leq i\leq n-1$. 
A \emph{cycle} is a path $e_1, e_2,\ldots,
e_n$ such that $\trg{e_n}=\src{e_1}$. A cycle $e_1, e_2,\ldots, e_n$ is
\emph{simple} if $\src{e_i}\neq\src{e_j}$ for all $i\neq j$. A
\emph{timed path} of $\cH$ is a finite sequence of the form $\pi=(t_1,
e_1), (t_2,e_2),\ldots,(t_n, e_n)$, such that $e_1,\ldots, e_n$ is a path in
$\cH$ and $t_i\in\posreal$ for all $0\leq i\leq n$. We lift the
notions of cycle and simple cycle to the timed case accordingly. Given
a timed path $\pi=(t_1, e_1), (t_2,e_2),\ldots,(t_n, e_n)$, we denote
by $\pi[i:j]$ (with $1\leq i\leq j\leq n$) the timed path
$(t_i,e_i),\ldots, (t_j, e_j)$.

A \emph{run} in $\cH$ is a sequence $s_0,
(t_0,e_0), s_1, (t_1, e_1),\ldots, (t_{n-1}, e_{n-1}), s_n$ such that:
\begin{itemize}
\item $(t_0, e_0),(t_1,e_1),\ldots, (t_{n-1},e_{n-1})$ is a timed path in
$\cH$, and 
\item for all $1\leq i<n$, there exists a state $s_i'$ of $\cH$
with $s_i\xrightarrow{t_i}s_i'\xrightarrow{e_i}s_{i+1}$.
\end{itemize}
Given a run $\rho=s_0, (t_0,e_0),\dots, s_n$, let
$\first{\rho} = s_0 = (\ell_0, \val_0)$, $\last{\rho} = s_n$, 
$\duration{\rho} = \sum_{i=1}^{n-1} t_i$, and $\len{\rho}=n+1$. 
We say that~$\rho$ is 
$(i)$ \emph{strict} if $t_i > 0$ for all $1 \leq i \leq n-1$; 
$(ii)$ \emph{$k$-variable-bounded} (for $k \in \IN$) if $\val_0(x) \leq k$ for all $x \in X$, 
and $s_i \xrightarrow{t_i} (\ell_i,\val_i)$ implies that $\val_i(x) \leq k$ for all $0 \leq i \leq n$; 
$(iii)$ \emph{$\tb$-time-bounded} (for $\tb \in \IN$) if $\duration{\rho} \leq \tb$.

Note that a unique timed path $\tpof{\rho}=(t_0,
e_0),(t_1,e_1),\ldots, (t_{n-1},e_{n-1})$, is associated to each run
$\rho=s_0, (t_0,e_0), s_1,\ldots, (t_{n-1}, e_{n-1}), s_n$.  Hence, we
sometimes abuse notation and denote a run $\rho$ with
$\first{\rho}=s_0$, $\last{\rho}=s$ and $\tpof{\rho}=\pi$ by
$s_0\xrightarrow{\pi}s$. The converse however is not true: given a
timed path $\pi$ and an initial state $s_0$, it could be impossible to
build a run starting from $s_0$ and following $\pi$ because some
guards or invariants along $\pi$ might be violated. However, if such a
run exists it is necessarily unique \emph{when the automaton is
  singular and all resets are deterministic}. In that case, we denote
by $\rof{s_0}{\pi}$ the function that returns the unique run $\rho$
such that $\first{\rho}=s_0$ and $\tpof{\rho}=\pi$ if it exists, and
$\bot$ otherwise.

\paragraph{{Time-bounded reachability problem for LHA.}}
While the reachability problem asks to decide the existence of any
timed run that reaches a given goal location, we are only interested
in runs having bounded duration.

\begin{problem}[Time-bounded reachability problem]
  Given an LHA ${\mathcal{H}} = (X,\loc, \break \edges, \break
  \rate,\invariants,\init)$, a location $\goal \in \loc$ and a time
  bound $\tb \in \IN$, the \emph{time-bounded reachability problem} is
  to decide whether there exists a finite run $\rho=(\init, \initval)
  \xrightarrow{\pi} (\goal,\cdot)$ of ${\mathcal{H}}$ with
  $\duration{\rho} \le \tb$.
\end{problem}

In the following table, we summarize the known facts regarding
decidability of the reachability problem for LHA, along with the
results on time-bounded reachability that we prove in the rest of this
paper.
Note that decidability for initialized rectangular hybrid automata
(IHRA) follows directly from~\cite{HenzingerKPV98}.  
We show
decidability for (non-initialized) RHA that only have non-negative
rates in Section~\ref{sec:decid}. The undecidability of the
time-bounded reachability problem for RHA and LHA is not a consequence
of the known results from the literature and require new proofs that
are given in Section~\ref{sec:undec-bound-time}.

\begin{center}
\begin{tabular}{ || c | c | c ||} \hline \hline
HA classes & Reachability & Time-Bounded Reachability \\
\hline
LHA & U~\cite{AlurCHHHNOSY95} & \textbf{U} (see Section~\ref{sec:undec-bound-time}) \\
RHA & U~\cite{HenzingerKPV98} & \textbf{U} (see Section~\ref{sec:undec-bound-time}) \\
non-negative rates RHA & U~\cite{HenzingerKPV98} & \textbf{D} (see Section \ref{sec:decid}) \\
IRHA & D~\cite{HenzingerKPV98} & D~\cite{HenzingerKPV98}  \\
\hline \hline
\end{tabular}
\end{center}

\paragraph{Example of time bounded reachability}
Let $\cH$ be the hybrid automaton of Fig.~\ref{fig:hybsys} with the
convention that the transition starting from $\ell_i$ and ending in
$\ell_j$ is denoted $e_{ij}$. Although not explicitly stated on the
figure, we assume that all the locations are equipped with the
invariant $(x \le 1) \wedge (y\le 1)$.  As this automaton uses only
rectangular constraints and positive rates, it is in the class for
which we show the decidability of the time-bounded reachability
problem (see Section~\ref{sec:decid}). Note that it is non-initialized
as, for example, variable $y$ is not reset from location $\ell_0$ to
location $\ell_1$ while its rate is changing, and it is singular,
diagonal-free, and multirate.

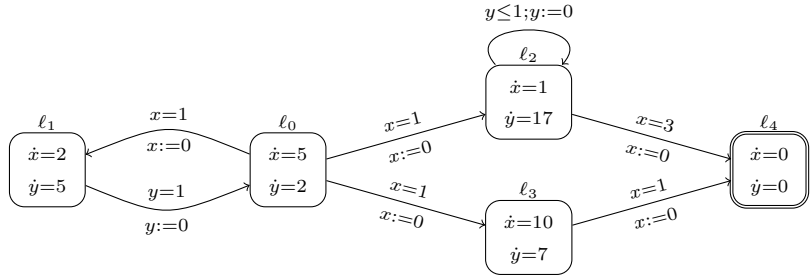
\begin{figure}[h!]
\begin{center}
\begin{tikzpicture}[xscale=.8,yscale=.9]
  \everymath{\scriptstyle}

  \path (0,0) node[draw,rectangle,rounded corners=2mm,inner sep=2pt]
  (q0) {$\begin{array}{c} \dot{x} = 5 \\ \dot{y} = 2 \end{array}$};

  \path (0,.7) node[] (q0b) {$\ell_0$};
  \path (-4,.7) node[] (q0b) {$\ell_1$};
  \path (4,1.7) node[] (q0b) {$\ell_2$};
  \path (4,-.3) node[] (q0b) {$\ell_3$};
  \path (8,.7) node[] (q0b) {$\ell_4$};


  \path (-4,0) node[draw,rectangle,rounded corners=2mm,inner sep=2pt]
  (q1) {$\begin{array}{c} \dot{x} = 2 \\ \dot{y} = 5 \end{array}$};

  \path (4,1) node[draw,rectangle,rounded corners=2mm,inner sep=2pt]
  (q2) {$\begin{array}{c} \dot{x} = 1 \\ \dot{y} = 17 \end{array}$};

  \path (4,-1) node[draw,rectangle,rounded corners=2mm,inner sep=2pt]
  (q3) {$\begin{array}{c} \dot{x} = 10 \\ \dot{y} = 7 \end{array}$};

  \path (8,0) node[draw,rectangle,rounded corners=2mm,inner sep=2pt,double]
  (q4) {${\begin{array}{c} \dot{x} = 0 \\ \dot{y} =
        0 \end{array}}$};

   \draw[arrows=->] (q0) .. controls +(160:2.1cm) .. (q1)
      node[pos=.5,above,sloped] {{$x=1$}}
      node[pos=.5,below,sloped] {${x:=0}$};

   \draw[arrows=->] (q1) .. controls +(340:2.1cm) .. (q0)
      node[pos=.5,above,sloped] {{$y=1$}}
      node[pos=.5,below,sloped] {${y:=0}$};

   \draw[arrows=->] (q0) -- (q2)
      node[pos=.5,above,sloped] {{$x=1$}}
      node[pos=.5,below,sloped] {${x:=0}$};

   \draw[arrows=->] (q0) -- (q3)
      node[pos=.5,above,sloped] {{$x=1$}}
      node[pos=.5,below,sloped] {${x:=0}$};

   \draw[arrows=->] (q2) .. controls +(135:1.75cm) and +(45:1.75cm) .. (q2)
      node[pos=.5,above,sloped] {{$y\le1;y:=0$}};

   \draw[arrows=->] (q2) -- (q4)
      node[pos=.5,above,sloped] {{$x=3$}}
      node[pos=.5,below,sloped] {${x:=0}$};

   \draw[arrows=->] (q3) -- (q4)
      node[pos=.5,above,sloped] {{$x=1$}}
      node[pos=.5,below,sloped] {${x:=0}$};  
\end{tikzpicture}
\end{center}
\caption{A singular, diagonal-free, multirate hybrid automaton.}
\label{fig:hybsys}
\end{figure}

Assume we want to reach location $\ell_4$ from $(\ell_0,0,0)$ within
one time unit. One clearly see that the duration of any run starting
from $\ell_0$ and crossing $\ell_2$ will exceed one time unit. An
other possibility would be to directly go from $\ell_0$ to $\ell_3$.
In this case, when reaching location $\ell_3$, after crossing
$e_{03}$, the value of the variable $x$ (resp. $y$) is $0$
(resp. $\frac{2}{5}$). Thus, in order to cross $e_{34}$, one should
wait $\frac{1}{10}$ time units, if we do so, the value of $y$ will
reach $\frac{11}{10}$ and violate the invariant. It is thus impossible
to reach $\ell_3$ from $(\ell_0,0,0)$ without visiting $\ell_1$. A
single visit to $\ell_1$ is sufficient as the following run testifies:
$ (\ell_0,0,0) \xrightarrow{\frac{1}{5},e_{01}}
\left(\ell_1,0,\frac{2}{5}\right) \xrightarrow{\frac{3}{25},e_{10}}
\left(\ell_0,\frac{6}{25},0\right) \xrightarrow{\frac{17}{125},e_{03}}
\left(\ell_3,0,\frac{34}{125}\right) \xrightarrow{\frac{1}{10},e_{34}}
\left(\ell_4,0,\frac{243}{250}\right).  $ The illustration of the
evolution of the variables along this run is given in
Fig.~\ref{fig:evol-var-1}. In this picture, the evolution of the
$x$-variable (resp. of the $y$-variable) is represented by the dashed
(resp. plain) curve. The evolutions of the valuations of the variables
along the beginning of the unique run looping between $\ell_0$ and
$\ell_1$ is illustrated in Fig.~\ref{fig:evol-var-2}. Looking at that
looping run, one could be convinced that $\cH$ does not admit a finite
bisimulation quotient. 

\begin{figure}
  \null\hfill
  \begin{minipage}[b]{0.48\linewidth}
    \begin{center}
\begin{tikzpicture}[xscale=7]
  \everymath{\scriptstyle}

  \draw[->] (0,0)--(.65,0); 

  \draw[->] (0,0)--(0,1.2);

  \draw (0,1)--(.65,1);

  \draw[dotted] (.2,0)--(.2,1);
  \draw[dotted] (.32,0)--(.32,1);
  \draw[dotted] (.456,0)--(.456,1);
  \draw[dotted] (.556,0)--(.556,1);

  \draw[dashed] (0,0)--(.2,1);
  \draw[] (0,0)--(.2,.4)--(.32,1);

  \draw (.2,-.05)--(.2,0); 
  \path (.2,-.25) node[] (q0) {$\frac{1}{5}$};

  \draw[dashed] (.2,0)--(.32,.24) -- (.456,1);

  \draw (.32,-.05)--(.32,0); 
  \path (.32,-.25) node[] (q0) {$\frac{8}{25}$};

  \draw[] (.32,0)--(.456,.272)--(.556,.972);

  \draw (.456,-.05)--(.456,0); 
  \path (.456,-.25) node[] (q0) {$\frac{57}{125}$};

  \draw[dashed] (.456,0)--(.556,1);

  \draw (.556,-.05)--(.556,0); 
  \path (.556,-.25) node[] (q0) {$\frac{139}{250}$};
\end{tikzpicture}
      \caption{A successful run.}
      \label{fig:evol-var-1}
    \end{center}
  \end{minipage}
  \hfill
  \begin{minipage}[b]{0.48\linewidth}
    \begin{center}
\begin{tikzpicture}[xscale=7]
  \everymath{\scriptstyle}

  \draw[->] (0,0)--(.85,0); 

  \draw[->] (0,0)--(0,1.2);

  \draw (0,1)--(.85,1);

  \draw[dashed] (0,0)--(.2,1);
  \draw[dotted] (0,0)--(.2,.4)--(.32,1);

  \draw (.2,-.05)--(.2,0); 
  \path (.2,-.25) node[] (q0) {$\frac{1}{5}$};

  \draw[dashed] (.2,0)--(.32,.24) -- (.456,1);

  \draw (.32,-.05)--(.32,0); 
  \path (.32,-.25) node[] (q0) {$\frac{8}{25}$};

  \draw[dotted] (.32,0)--(.456,.272) -- (.601,1);

  \draw (.456,-.05)--(.456,0); 
  \path (.456,-.25) node[] (q0) {$\frac{57}{125}$};

  \draw[dashed] (.456,0)--(.601,.291) -- (.743,1);

  \draw (.601,-.05)--(.601,0); 
  \path (.601,-.25) node[] (q0) {$\frac{376}{625}$};

  \draw[dotted] (.601,0)--(.743,.283);

  \draw (.743,-.05)--(.743,0); 
  \path (.743,-.25) node[] (q0) {$\frac{2323}{3125}$};
  
\end{tikzpicture}
\caption{A loop between $\ell_0$ and $\ell_1$.}
      \label{fig:evol-var-2} 
    \end{center}
  \end{minipage}
  \hfill\null
\end{figure}


\section{Decidability for  RHA with Non-Negative Rates}\label{sec:decid}
In this section, we prove that the time-bounded reachability problem
is \emph{decidable} for the class of (non-initialized)
\emph{rectangular} hybrid automata having \emph{non-negative rates},
while it is \emph{undecidable} for this class in the classical
(unbounded) case~\cite{HenzingerKPV98}. Note that this class is
interesting in practice since it contains, among others, the
important class of \emph{stopwatch automata}, a significant subset of
LHA that has several useful applications~\cite{CassezL00}.  We obtain
decidability by showing that for RHA with non-negative
rates, a goal location is reachable within $\tb$ time units iff there
exists a witness run of that automaton which reaches the goal (within
$\tb$ time units) by a run~$\rho$ of length $\abs{\rho} \leq
K_\tb^\cH$ where $K_\tb^\cH$ is a parameter that depends on $\tb$ and
on the size of the automaton~$\cH$. Time-bounded reachability can thus
be reduced to the satisfiability of a formula in the first order
theory of the reals encoding the existence of runs of length at most
$K_\tb^\cH$ and reaching $\goal$.

For simplicity of the proofs, we consider RHA with the following
restrictions: (i)~the guards \emph{do not contain strict
  inequalities}, and (ii)~the rates are \emph{singular}.  We argue at
the end of this section that these restrictions can be made without
loss of generality.  Then, in order to further simplify the
presentation, we show how to syntactically simplify the automaton
while preserving the time-bounded reachability properties. The details
of the constructions can be found in the appendix.

\begin{proposition}\label{prop:syntax}
  Let $\cH$ be a singular RHA with non-negative rates and without
  strict inequalities, and let $\goal$ be a location of $\cH$. We can
  build a hybrid automaton $\cH'$ with the following the properties:
  \begin{enumerate}
  \item[$\sf H_1$] $\cH'$ is a singular RHA with non-negative rates
  \item[$\sf H_2$] $\cH'$ contains only deterministic resets
  \item[$\sf H_3$] for every edge $(\ell,g,r,\ell')$ of $\cH'$, $g$ is
    either $\true$ or of the form $x_1=1\wedge x_2=1\wedge\cdots\wedge
    x_k=1$, and $r\equiv x_1'=0\wedge \cdots \land x_k'=0$.
  \end{enumerate}
  and a set of locations $S$ of $\cH'$ such that $\cH$ admits a
  $\tb$-time bounded run reaching $\goal$ iff $\cH'$ admits a
  strict $1$-variable-bounded, and $\tb$-time bounded run reaching
  $S$.
\end{proposition}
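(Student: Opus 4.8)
The plan is to transform $\cH$ into $\cH'$ through a sequence of local, semantics-preserving rewriting steps, each of which establishes one of the properties $\sf H_1$--$\sf H_3$ (or one of the run-shape requirements: strictness, $1$-variable-boundedness) while maintaining a tight correspondence between time-bounded runs of the intermediate automata. Throughout, I would keep a bijection (or at least a reachability-preserving simulation in both directions) between $\tb$-time bounded runs reaching the target, and argue that durations are preserved exactly, so the time bound $\tb$ is respected at every step. The set $S$ will simply be the collection of copies of $\goal$ that survive the construction.

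First I would handle rescaling to get $1$-variable-boundedness. Since we only consider $\tb$-time bounded runs and all rates are non-negative and bounded by $\rmax$, along any such run every variable stays in $[0, \tb\cdot\rmax]$ before any reset, and resets in the original (rectangular) automaton assign values from intervals with integer endpoints; so all reachable valuations are bounded by some constant $N$ depending only on $\cH$ and $\tb$. Dividing every constant appearing in guards, invariants, rates, and reset intervals by $N$ yields an automaton whose $\tb$-time bounded runs are exactly the images of those of $\cH$ under the valuation-scaling $\val \mapsto \val/N$, and in which all reachable valuations lie in $[0,1]$; this gives $1$-variable-boundedness. (One subtlety: after rescaling, constants are rationals, not integers, but this is harmless for the subsequent steps and for the final reduction to the first-order theory of the reals.) Next, to obtain singular resets ($\sf H_2$) and the guard/reset normal form ($\sf H_3$), I would replace each non-deterministic reset $r(x) = [a,b]$ by a gadget: a short detour through auxiliary locations in which a fresh variable is reset to $0$ and then allowed to grow at rate $1$ until a guard $=1$ fires, effectively ``guessing'' a value, and then copied into $x$. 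Guards of the general rectangular form $x \in [c,d]$ are likewise re-expressed by first introducing an equality test: one can pre-reset an auxiliary variable and use comparisons of the form (auxiliary variable) $=1$ after appropriate time has elapsed. The point of these gadgets is that all the real nondeterminism is pushed into time delays rather than into the combinatorial choice of reset values, which is exactly what $\sf H_3$ demands; and because the gadgets use only non-negative (in fact rate-$1$ or rate-$0$) dynamics and equality guards, $\sf H_1$ is preserved.

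Finally, strictness (all intermediate delays strictly positive) is obtained by a standard contraction argument: given a run with some $t_i = 0$, one merges the two adjacent edge steps into the gadget structure, or observes that a zero-delay step through the normalized guards $x_1 = 1 \wedge \cdots \wedge x_k = 1$ can be absorbed; alternatively, one keeps only strict runs from the start by noting that any $\tb$-time bounded run reaching $\goal$ can be replaced by a strict one of the same duration reaching $\goal$, because between two consecutive edges taken with zero delay at the same valuation one of them is redundant (the automaton being singular with deterministic resets after the previous steps, so the run from any state along a timed path is unique). I expect the main obstacle to be the bookkeeping in the reset-elimination gadgets: one must ensure that the auxiliary variables do not themselves violate the $1$-variable bound (they can, since growing at rate $1$ for up to $\tb$ time units could exceed $1$ — so the auxiliary variables must be introduced \emph{after} rescaling, with rate $1/N$, or the rescaling must be re-applied), and that introducing these detours does not consume time in a way that breaks the $\tb$ bound — which forces the gadget delays to be charged against the original delay they replace, requiring that ``guessing'' gadgets take zero time or that their time cost be subtracted elsewhere. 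Getting this accounting exactly right, so that the duration of the transformed run equals that of the original, is the delicate part; the rest is routine, and the full details are deferred to the appendix.
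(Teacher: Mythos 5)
Your proposal has a genuine gap, and it sits exactly where you yourself locate ``the delicate part'': the gadgets you use to eliminate non-deterministic resets and to normalize guards \emph{consume time}. A detour in which a fresh variable is reset to $0$, grows at rate $1$ (or $1/N$), and is stopped by an equality guard necessarily takes an amount of time equal to the value being guessed; there is no way to ``charge this against the original delay'' in general, because the original reset takes zero time and the guessed value is unrelated to the delay of the surrounding continuous steps. Since the whole point of the proposition is an \emph{exact} preservation of $\duration{\rho}\le\tb$, this is not bookkeeping to be deferred --- it is the crux, and the gadget approach fails at it. The paper's construction avoids the problem entirely: a non-deterministic reset $r(x)=I$ is replaced by a reset to $0$, the interval $I$ is recorded in the (finite) location, and every \emph{subsequent} guard $x\in J$ is rewritten as $x\in J-I$ via the $\adapt{\cdot}{\cdot}$ operator. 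This is a purely syntactic shift that adds no variables and no time. Similarly, $1$-variable-boundedness and the $x=1$ guard normal form are obtained not by rescaling constants (which would leave you with non-integer guards incompatible with $\sf H_3$) but by storing the integer part of each variable in the location and adding wrap-around edges guarded by $x_k=1$ that reset $x_k$ and increment the stored integer part --- again at zero time cost.

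A secondary issue: your strictness argument claims that of two consecutive zero-delay edges ``one of them is redundant.'' That is not true --- both edges may be needed (they traverse distinct locations and perform distinct resets), so neither can be dropped. What the paper does instead is precompute, in the locations, the set of paths containing at most one occurrence of each simple cycle, and collapse an entire zero-delay edge sequence into a \emph{single} edge whose guard is the conjunction of the constituent guards with variables reset along the way substituted by $0$. Your rescaling idea for variable-boundedness is a genuinely different (and potentially workable) route to that one sub-goal, but as written it does not compose with the rest of your plan, because the guard normal form and the reset elimination still need a zero-time mechanism that your proposal does not supply.
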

\begin{proof}
  The proof is given in Appendix~\ref{sec:constr-prove-prop}\qed
\end{proof}
As a consequence, to prove decidability of time-bounded reachability
of RHA with non-negative rates, we only need to prove that we can
decide whether an RHA respecting $\sf H_1$ through $\sf H_3$ admits a
\emph{strict} run $\rho$ reaching the goal within $\tb$ time units,
and where all variables are bounded by $1$ along $\rho$.

\paragraph{Bounding the number of equalities.} As a first step to
obtain a witness of time-bounded reachability, we bound the number of
transitions guarded by equalities along a run of bounded duration:

\begin{proposition}\label{prop-equal}
  Let $\cH$ be an LHA, with set of variables $X$ and respecting
  hypothesis $\sf H_1$ through $\sf H_3$. Let $\rho$ be a $\tb$-time
  bounded run of $\cH$. Then, $\rho$ contains at most
  $|X|\cdot\rmax\cdot\tb$ transitions guarded by an equality.
\end{proposition}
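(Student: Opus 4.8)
The plan is to exploit hypothesis $\sf H_3$, which says that every guarded edge requires one or more variables to equal exactly $1$ and then immediately resets those variables to $0$. Because rates are singular and non-negative (by $\sf H_1$), each variable $x$ increases monotonically over time at a fixed rate $c_\ell$ in each location $\ell$, and $c_\ell \leq \rmax$ for every $\ell$. The key observation is that for a single variable $x$, the only way to cross a transition whose guard mentions $x=1$ is for $x$ to have grown from $0$ (its value just after the previous reset of $x$, or its initial value $0$) up to $1$ — and growing by a full unit at rate at most $\rmax$ takes at least $1/\rmax$ time units. Hence each time $x$ participates in an equality-guarded transition, at least $1/\rmax$ time has elapsed since the previous such occurrence for $x$ (or since the start of the run). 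In a run of total duration at most $\tb$, variable $x$ can therefore participate in at most $\rmax \cdot \tb$ equality-guarded transitions.

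The steps I would carry out are: (1) Fix a variable $x$ and consider the subsequence of edges $e_{i_1}, e_{i_2}, \ldots$ along $\tpof{\rho}$ whose guard contains the conjunct $x=1$; by $\sf H_3$ each such edge resets $x$ to $0$. (2) Argue that between two consecutive such edges $e_{i_j}$ and $e_{i_{j+1}}$ — and between the start of $\rho$ and $e_{i_1}$ — the value of $x$ goes from $0$ to $1$, using the fact that $x$ is never decreased (non-negative singular rates) and that $x$ must satisfy the guard $x=1$ exactly when $e_{i_{j+1}}$ is taken while being $0$ right after $e_{i_j}$. (3) Since in every location the rate of $x$ is some constant $c \in [0,\rmax]$, the time needed for $x$ to rise by $1$ is at least $1/\rmax$ (note $c>0$ is forced on the delays contributing to this increase, and even summing over several locations the total delay is $\geq 1/\rmax$). (4) Conclude that the number of $x$-equality transitions is at most $\lfloor \rmax\cdot\tb\rfloor + 1$, and more crudely at most $\rmax\cdot\tb$ after a careful accounting (or simply bound by $\rmax \cdot \tb$ up to the stated constant). (5) Sum over all $|X|$ variables: each equality-guarded transition has a nonempty set of variables forced to $1$, so it is counted for at least one variable, giving the bound $|X|\cdot\rmax\cdot\tb$ on the total number of equality-guarded transitions in $\rho$.

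The main obstacle is the bookkeeping in step (3)–(4): making precise that the elapsed time between two consecutive resets of $x$ is genuinely at least $1/\rmax$, even when the run passes through several locations (with possibly different rates for $x$, including rate $0$) between the two resets, and even when intervening edges with guard $\true$ are taken without delay. The argument is that the increase of $x$ from $0$ to $1$ is the sum of contributions $c_\ell \cdot t_\ell$ over the visited locations, each $c_\ell \leq \rmax$, so $\sum_\ell t_\ell \geq (\sum_\ell c_\ell t_\ell)/\rmax = 1/\rmax$; these delays are disjoint pieces of $\duration{\rho}$, so at most $\rmax\cdot\tb$ of them fit. One should also double-check the edge case of the very first equality transition involving $x$: its value starts at $\val_0(x)=0$ (the run starts at $(\init,\initval)$, or more generally at a $1$-variable-bounded valuation, but $\sf H_3$-form runs obtained from Proposition~\ref{prop:syntax} start at $\initval$), so the same $1/\rmax$ lower bound applies, and the final tally is $|X|\cdot\rmax\cdot\tb$ as claimed.
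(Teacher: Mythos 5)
Your proposal is correct and follows essentially the same route as the paper: by $\sf H_3$ every equality-guarded edge tests some variable against $1$ and resets it to $0$, so each variable must climb from $0$ to $1$ at rate at most $\rmax$ (hence in time at least $1/\rmax$) between consecutive such edges, giving at most $\rmax\cdot\tb$ occurrences per variable and $|X|\cdot\rmax\cdot\tb$ in total. The paper phrases this as a proof by contradiction with a pigeonhole step, whereas you count directly per variable and sum, but the underlying argument is identical.
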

\begin{proof}
  For a contradiction, assume that there exists an execution $\rho$ of
  ${\mathcal{H}}$ with $M$ transitions containing (at least) an
  equality where $M > |X| \cdot \rmax \cdot \tb$. By $\sf H_3$, the
  equalities in the guards are of the form $x=1$. In particular, there
  must exists a variable $y \in X$ which has been tested equal to one
  (and thus reset to zero by $\sf H_3$) strictly more than $\rmax \cdot
  \tb$ times. Since all the rates of $y$ are non negative by $\sf
  H_1$, the shortest time needed to reach the guard $y=1$ from the
  value $0$ is $\frac{1}{\rmax}$. Along $\rho$, the variable $y$ has
  reached the guard $y=1$ from $0$ strictly more than $\rmax \cdot
  \tb$ times; this implies that $\duration{\rho} > \rmax \cdot \tb
  \cdot \frac{1}{\rmax} = \tb$ which is a contradiction.\qed
\end{proof}



\paragraph{Bounding runs without equalities.}
Unfortunately, it is not possible to bound the number of transitions
that do not contain equalities, even along a time-bounded
run. However, we will show that, given a time-bounded run $\rho$
without equality guards, we can build a run $\rho'$ that is equivalent
to $\rho$ (in a sense that its initial and target states are the
same), and whose length is \emph{bounded} by a parameter depending on
the size of the automaton. More precisely:
\begin{proposition}\label{prop:bounded-length}
  Let $\cH$ be an RHA with non-negative rates. For any $1$-variable
  bounded and $\frac{1}{\rmax+1}$-time bounded run
  $\rho=s_0\xrightarrow{\pi}s$ of $\cH$ that contains no equalities in
  the guards, $\cH$ admits a $1$-variable bounded and
  $\frac{1}{\rmax+1}$-time bounded run $\rho'=s_0\xrightarrow{\pi'}s$
  such that $\len{\rho'}\leq \finalbound$.
\end{proposition}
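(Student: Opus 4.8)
The plan is to bound the length of $\rho'$ by repeatedly eliminating "redundant" portions of the timed path while preserving the endpoints $s_0$ and $s$. Since $\rho$ contains no equality guards, by $\sf H_3$ every edge of $\rho$ has guard $\true$ and empty reset; thus the only nontrivial constraints that can fail along a candidate timed path are the invariants. The key observation is that, under hypotheses $\sf H_1$--$\sf H_3$ and the $1$-variable-boundedness and $\frac{1}{\rmax+1}$-time-boundedness assumptions, the run $\rho$ is controlled by very little information: the sequence of \emph{locations} visited, the rate vector in each location (a singular constant by $\sf H_1$), and the total time spent in each location. In fact, because no variable is ever reset along $\rho$ (no equalities, hence no resets), the valuation at any point is simply $\val_0 + \sum (\text{rate}_\ell \cdot (\text{time in }\ell))$, so the final valuation depends only on the \emph{multiset} of (location, time-in-location) pairs, not on their order or on how the time in a given location is split across several visits.

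First I would make this precise: given the timed path $\pi = (t_0,e_0),\dots,(t_{n-1},e_{n-1})$ of $\rho$, I define for each location $\ell$ the quantity $T_\ell = \sum_{i : \src{e_i}=\ell} t_i$ (the total time the run spends in $\ell$ before taking an edge out of it), plus a small bookkeeping term for the final dwell in $\last{\rho}$'s location. Then $\last{\rho}$ depends only on $s_0$ and the function $\ell \mapsto T_\ell$. The strategy is then to build $\pi'$ that realizes the \emph{same} values $T_\ell$ but visits each location only a bounded number of times. To do this I would argue that one can reorder/merge: collect all the time spent in each location into a single "block", and chain the blocks together along a \emph{short path} in the location graph of $\cH$. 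Concretely, consider the sub-automaton on the set $L$ of locations actually visited by $\rho$; it is strongly traversable in the sense that $\rho$ witnesses a walk through all of $L$. I would extract a walk that visits each location of $L$, dwelling the full time $T_\ell$ there, and uses at most, say, $2|L| - 1 \le 2|\loc| - 1$ edges (a standard bound: a connected visiting walk on $k$ vertices can be taken of length $\le 2k-2$, e.g. via a DFS traversal of a spanning tree of the relevant edges). Here the bound $\boundleq = |\loc|\cdot(2^{(|\edges|+1)}+1)$ enters as a crude but sufficient over-estimate of the number of edges needed to realize any required connectivity pattern among visited locations, and the leading $2|X|$ and multiplicative $(2|X|+1)$ factors in $\finalbound$ account for the handful of extra transitions needed at the two endpoints to match $s_0$ and $s$ exactly (e.g. to place the initial and final partial dwells correctly for each of the $|X|$ variables).

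The main step requiring care is verifying that the \emph{invariants} are respected along the reorganized run $\pi'$. This is where non-negativity of rates ($\sf H_1$) and $1$-variable-boundedness are crucial: since all variables only increase, and since along $\rho$ every variable stays in $[0,1]$ (by $1$-variable-boundedness) at every intermediate state, each variable's value at the \emph{start} of the $\ell$-block in $\rho'$ is at most its value at any later point — so if the monotone trajectory of $\rho$ satisfied all invariants, so does the rearranged monotone trajectory, provided each block's dwell time $T_\ell$ is split so that the invariant of $\ell$ (a rectangular constraint, hence a box, by $\sf H_2$/RHA) is not violated. Since the invariant of $\ell$ is convex and was satisfied throughout the (possibly several) visits to $\ell$ in $\rho$, and values only grow, it is satisfied throughout a single consolidated visit of total length $T_\ell$ — the entry value is no larger than in $\rho$ and the exit value is exactly the same sum. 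The $\frac{1}{\rmax+1}$-time bound guarantees $\duration{\rho'} = \duration{\rho} \le \frac{1}{\rmax+1}$ is preserved automatically since we do not change total dwell times. I expect the bulk of the technical work — and the main obstacle — to be the careful combinatorial argument that a bounded-length path through the visited-location subgraph can be chosen that simultaneously (a) realizes each $T_\ell$, (b) connects up in the right order so the endpoints match, and (c) respects invariants at the junction transitions; handling degenerate cases (a variable already at its upper bound, locations visited with $T_\ell = 0$, the exact placement of the initial/final fractional dwells) is what inflates the clean bound $2|\loc|$ to the stated $\finalbound$.
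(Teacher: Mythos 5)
Your proposal rests on the opening claim that, because $\rho$ contains no equality guards, $\sf H_3$ forces every edge of $\rho$ to have an empty reset, so that the final state depends only on the multiset of (location, dwell-time) pairs. This premise is false in the intended setting: the automaton $\cH'$ produced by Proposition~\ref{prop:syntax} contains edges whose guard is $\true$ but which nevertheless reset variables to $0$ (the original resets of $\cH$, made deterministic by the first construction, survive on edges whose guards become $\true$ after adaptation). Indeed the paper's own proof of Proposition~\ref{prop:bounded-length} is organised entirely around these resets: it isolates, for each variable, the positions of its \emph{first} and \emph{last} reset along $\pi$ (at most $2|X|$ positions — this is where the $2|X|$ and $(2|X|+1)$ in $\finalbound$ come from, not from ``endpoint dwells''), keeps those transitions at their exact absolute times, and only compresses the $2|X|+1$ reset-free segments in between. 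Once resets are present, your reduction of the run to a multiset of dwell times, and hence the whole block-reordering strategy, collapses: intermediate resets determine both the final valuation and whether a variable stays below $1$.

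The second gap is combinatorial. You propose to chain one consolidated block per visited location along a walk of length $O(|L|)$ obtained by a DFS of a spanning tree. That bound is only valid in undirected graphs: the location graph of $\cH$ is directed, reverse edges need not exist, and a covering walk realizing prescribed dwell times in a prescribed order compatible with the directed edge relation may simply not exist, let alone have linear length. You gesture at $\boundleq$ as a ``crude over-estimate,'' but give no construction of a realizable timed path achieving the $T_\ell$'s. The paper avoids this entirely by never reordering: its contraction operator keeps the original edge sequence and merely deletes a repeated occurrence of a simple cycle while adding its delays to the surviving occurrence, so realizability of the edge sequence is automatic, the effect $\effect{\pi}{x}$ and the duration are preserved exactly, and the exponential bound $\boundleq$ falls out of counting paths with at most one occurrence of each simple cycle. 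Finally, your invariant argument (``the entry value is no larger than in $\rho$'') is not true under arbitrary reordering of blocks; it happens to be repairable here only because all invariants of $\cH'$ are the uniform upper bound $x\le 1$ and the reset-free trajectory is monotone with a fixed endpoint, but as written it would fail for general box invariants.
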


Note that Proposition~\ref{prop:bounded-length} applies only to runs of duration at most
$\frac{1}{\rmax+1}$. However, this is not restrictive, since any
$\tb$-time-bounded run can always be split into at most
$\tb\cdot(\rmax+1)$ subruns of duration at most $\frac{1}{\rmax+1}$,
provided that we add a self-loop with guard $\true$ and no reset on
every location (this can be done without loss of generality as far as
reachability is concerned).

To prove Proposition~\ref{prop:bounded-length}, we rely on a
\emph{contraction operation} that receives a \emph{timed path} and
returns another one of smaller length.  Let $\pi=(t_1,e_1), (t_2,
e_2),\ldots, (t_n,e_n)$ be a timed path. We define $\contract{\pi}$ by
considering two cases. Let $j$, $k$, $j'$, $k'$ be four positions such
that $1\leq j\leq k< j'\leq k'\leq n$ and $e_j\ldots e_k=e_j'\ldots
e_k'$ is a \emph{simple cycle}. \textbf{If} such $j$, $k$, $j'$, $k'$
exist, then let:
\begin{eqnarray*}
  \contract{\pi}&=&
  \pi[1: j-1]\cdot(e_j,t_j+t_{j'})\cdots (e_k, t_k+t_{k'})\cdot\pi[k+1:j'-1]\cdot\pi[k'+1:n]
\end{eqnarray*}
\textbf{Otherwise}, we let $\contract{\pi}=\pi$.  
Observe that $\pi$ and $\contract{\pi}$ share the same source and
target locations, even when $\pi[k'+1:n]$ is empty.

Then, given a timed path $\pi$, we let $\contracti{\pi}{0}=\pi$,
$\contracti{\pi}{i}=\contract{\contracti{\pi}{i-1}}$ for any $i\geq
1$, and $\contractstar{\pi}=\contracti{\pi}{n}$ where $n$ is the least
value such that $\contracti{\pi}{n}=\contracti{\pi}{n+1}$. Clearly,
since $\pi$ is finite, and since $\len{\contract{\pi}}<\len{\pi}$ or
$\contract{\pi}=\pi$ for any~$\pi$, $\contractstar{\pi}$ always
exists. Moreover, we can always bound the length of
$\contractstar{\pi}$. This stems from the fact that
$\contractstar{\pi}$ is a timed path that contains at most one
occurrence of each simple cycle. The length of such paths can be
bounded using classical combinatorial arguments.

\begin{lemma}\label{lem:bound-contract-star}
  For any timed path $\pi$ of an LHA $\cH$ with $|\loc|$
  locations and $|\edges|$ edges: $\len{\contractstar{\pi}}\leq
  \boundleq$.
\end{lemma}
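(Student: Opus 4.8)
The plan is to prove the bound by a combinatorial argument on the underlying walk of $\contractstar{\pi}$ in the directed graph $(\loc,\edges)$, using the fact already isolated above that $\contractstar{\pi}$ is a fixed point of the contraction operation, i.e. $\contract{\contractstar{\pi}}=\contractstar{\pi}$, and hence contains at most one contiguous occurrence of each simple cycle. (The word ``one'' here is justified by first observing that a simple cycle cannot overlap a non-trivially shifted copy of itself: if an edge sequence $f_1\cdots f_\ell$ forming a simple cycle occurred at two ranges shifted by $d$ with $0<d<\ell$ then $f_{1+d}=f_1$, contradicting that the sources of $f_1,\dots,f_\ell$ are pairwise distinct; so any two occurrences are disjoint, and two disjoint occurrences would let $\contractname$ act, contradicting the fixed-point property.)

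Write $\contractstar{\pi}=e_1\cdots e_n$ and let $\ell_0,\ell_1,\dots,\ell_n$ be the sequence of locations it visits. I would cut this walk into \emph{blocks} greedily: having started a block at $\ell_a$, close it at the first index $b>a$ with $\ell_b\in\{\ell_a,\dots,\ell_{b-1}\}$ and start the next block at $\ell_b$ (the last block being the remaining suffix if no such $b$ exists). Each block then has length at most $|\loc|$, since the locations strictly preceding the repetition are pairwise distinct, and every block except possibly the last \emph{ends with a simple cycle}, namely the factor running from the first occurrence of the repeated location to its second occurrence. Calling these block-closing simple cycles $C_1,\dots,C_{m-1}$ (where $m$ is the number of blocks), the crucial observation is that $C_1,\dots,C_{m-1}$ are pairwise distinct as edge sequences: they lie in distinct blocks, hence occur at pairwise disjoint position ranges of $\contractstar{\pi}$, so two equal ones would give two disjoint contiguous occurrences of a single simple cycle and contradict the fixed-point property. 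Therefore $m-1$ is at most the number of distinct edge sequences forming a simple cycle of $\cH$; a simple-cycle edge sequence is determined by its edge set — a subset of $\edges$ inducing a single cycle — together with the choice of its first edge, and a careful count of these gives a bound of $2^{|\edges|+1}$. Hence $m\le 2^{|\edges|+1}+1$, and since each of the $m$ blocks has length at most $|\loc|$ we conclude $\len{\contractstar{\pi}}=n\le |\loc|\cdot(2^{|\edges|+1}+1)=\boundleq$.

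The main obstacle I anticipate is exactly the last counting step: driving the number of simple-cycle edge sequences down to the stated constant $2^{|\edges|+1}$ rather than a coarser bound such as $|\loc|\cdot 2^{|\edges|}$, which is where one must argue carefully about how the rotations of a fixed cyclic edge set are accounted for; this is the ``classical combinatorial argument'' the text alludes to and is what fixes the precise value in $\boundleq$. Everything else — the greedy block decomposition, the length-$\le|\loc|$ bound per block, and the disjointness-hence-distinctness of the $C_i$ — is routine once the decomposition is set up, and in particular does not depend on anything specific to hybrid automata beyond the graph $(\loc,\edges)$.
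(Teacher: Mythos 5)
Your overall strategy is the same as the paper's: since $\contractstar{\pi}$ admits no further contraction, it contains at most one occurrence of each simple cycle, and therefore decomposes into boundedly many pieces of length at most $|\loc|$ each. Your greedy block decomposition is in fact cleaner than the paper's alternating acyclic-path/simple-cycle decomposition (it packs each simple cycle together with the acyclic stretch preceding it into a single block of length at most $|\loc|$), and your preliminary observation that a simple cycle cannot overlap a shifted copy of itself --- so that ``no two disjoint occurrences'' really yields ``at most one occurrence'' --- is a point the paper leaves implicit.

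The proof is nevertheless incomplete, and the gap sits exactly where the constant in $\boundleq$ is decided. What you actually establish is $m-1\le N$, where $N$ is the number of distinct simple-cycle \emph{edge sequences}; the claim that ``a careful count gives $N\le 2^{|\edges|+1}$'' is asserted, not proved, and the parametrization you offer (edge set plus choice of first edge) only yields $N\le |\loc|\cdot 2^{|\edges|}$, hence a final bound of $|\loc|\cdot(|\loc|\cdot 2^{|\edges|}+1)$ rather than $\boundleq$. The paper sidesteps this by counting simple cycles as \emph{subsets} of $\edges$ (at most $2^{|\edges|}$ of them) in a decomposition into $k$ simple cycles and $k+1$ acyclic paths, each of length at most $|\loc|$, giving $|\loc|(2k+1)\le\boundleq$. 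You are right to be suspicious of that count: $\contractname$ only merges two occurrences of the \emph{same sequence}, and distinct rotations of one cyclic edge set are distinct sequences that can all survive in $\contractstar{\pi}$, so you have put your finger on a real subtlety in the paper's own argument. But having raised it, you must either actually prove $N\le 2^{|\edges|+1}$ (which does not follow from anything you write) or accept the weaker, $|\loc|^2\cdot 2^{|\edges|}$-type bound --- still finite, and still sufficient for the decidability theorem, but not the inequality stated in the lemma. As submitted, the key quantitative step is missing.
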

\begin{proof}
   Let $\contractstar{\pi}=(t_1, e_1), (t_2, e_2),\ldots,
  (t_n,e_,)$. First, observe that, by definition of $\contractname^*$,
  the actual values of the time delays $t_1$, $t_2$,\ldots $t_n$ are
  irrelevant to the length of $\contractstar{\pi}$, since the
  `contraction' is based solely on the edges. Still by definition of
  $\contractname^*$, also observe that the path $e_1, e_2,\ldots,
  e_n$ does not contain two occurrences of the same simple
  cycle. Thus, the length of $\contractstar{\pi}$ is always bounded by
  the length of the maximal path in $\cH$ that does not contain two
  occurrences of the same simple cycle.

  In order to compute this value, we first observe that any path
  $\sigma=e_1, e_2,\ldots e_n$ can always be decomposed into subpaths
  $\sigma_1,\sigma_2,\ldots \sigma_{2k},\sigma_{2k+1}$ where any
  $\sigma_{2i+1}$ (for $0\leq i\leq k$) is an acyclic path and any
  $\sigma_{2j}$ is a simple cycle (for $1\leq j\leq k$). This stems
  from the fact that any cycle (whether it is simple or not) can
  always be decomposed into a sequence of simple cycles and acyclic
  paths.

  Thus, the worst case scenario for a path containing at most one each
  simple cycle is to have a path of the form:
  $\sigma_1,\sigma_2,\ldots \sigma_{2k},\sigma_{2k+1}$ where each
  $\sigma_{2i+1}$ (for $0\leq i\leq k$) is of maximal length, and
  $\{\sigma_{2j} \mid 1\leq j\leq k\}$ is the set of all possible
  simple cycles. By definition of a simple cycle, in an automaton with
  $|\edges|$ and $|\loc|$ locations, there are at most $2^{|\edges|}$
  simple cycles, and each of them has at most length $|\loc|$
  (otherwise the cycle would contain two edges with the some origin
  and the cycle wouldn't be simple). Moreover, in such an automaton,
  each acyclic path is of length at most $|\loc|$ too. Hence, the
  worst case is a path $\sigma_1,\sigma_2,\ldots
  \sigma_{2k},\sigma_{2k+1}$ where, $k=2^{|\edges|}$, for all $1\leq
  i\leq k$: $\len{\sigma_{2i}}=|\loc|$ and for all $0\leq j\leq k$:
  $\len{\sigma_{2j+1}}=|\loc|$, that is a total length of $k\cdot
  |\loc| + (k+1)\cdot |\loc| = |\loc|\cdot (2k+1)=\boundleq$.\qed
\end{proof}

Note that the contraction operation is purely syntactic and works on
the timed path only. Hence, given a run $s_0\xrightarrow{\pi}s$, we
have no guarantee that
$\rof{s_0}{\contractstar{\pi}}\neq\bot$. Moreover, even in the
alternative, the resulting run might be
$s_0\xrightarrow{\contractstar{\pi}}s'$ with $s\neq s'$.
Nevertheless, we can show that $\contractstar{\pi}$ preserves some
properties of $\pi$. For a timed path $\pi=(t_1,e_1),\ldots,(t_n,e_n)$
of an LHA $\cH$ with rate function $\rate$, we let
$\effect{\pi}{x}=\sum_{i=1}^n\rate(\ell_i)(x)\cdot t_i$, where
$\ell_i$ is the initial location of $e_i$ for any $1\leq i\leq n$. Note
thus that, for any run $(\ell, \val)\xrightarrow{\pi}(\ell',\val')$,
for any variable $x$ \emph{which is not reset along $\pi$},
$\val'(x)=\val(x)+\effect{\pi}{x}$. It is easy to see that
$\contractstar{\pi}$ preserves the effect of $\pi$. Moreover, the
duration of $\contractstar{\pi}$ and $\pi$ are equal.

\begin{lemma}\label{lem:duration-preserved}\label{lem:contract-preserves-effect}
  For any timed path $\pi$: $(i)$
  $\duration{\pi}=\duration{\contractstar{\pi}}$ and $(ii)$  for any variable $x$:
  $\effect{\pi}{x}=\effect{\contractstar{\pi}}{x}$.
\end{lemma}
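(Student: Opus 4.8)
The plan is to prove both equalities for a single application of the contraction operator --- that is, $\duration{\contract{\pi}} = \duration{\pi}$ and $\effect{\contract{\pi}}{x} = \effect{\pi}{x}$ for every timed path $\pi$ and every variable $x$ --- and then to obtain the statement for $\contractname^*$ by an immediate induction on the number of steps, using that $\contractstar{\pi} = \contracti{\pi}{n}$ for some finite $n$ and $\contracti{\pi}{i} = \contract{\contracti{\pi}{i-1}}$.

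For the one-step claim, the case $\contract{\pi} = \pi$ is trivial, so assume positions $1 \le j \le k < j' \le k' \le n$ exist with $e_j \cdots e_k = e_{j'} \cdots e_{k'}$ a simple cycle. Two preliminary remarks are needed. First, since these two blocks of edges are equal as sequences they have the same length, so $k - j = k' - j'$; writing $m = k - j$, we get $e_{j+\iota} = e_{j'+\iota}$, hence $\src{e_{j+\iota}} = \src{e_{j'+\iota}}$, for all $0 \le \iota \le m$. Second, because $e_j \cdots e_k$ and $e_{j'} \cdots e_{k'}$ are copies of a cycle we have $\trg{e_{j'-1}} = \src{e_{j'}} = \src{e_j}$ and $\trg{e_k} = \src{e_j} = \src{e_{j'}} = \trg{e_{k'}} = \src{e_{k'+1}}$, so the two new junctions created by deleting the block $\pi[j':k']$ and adding its time stamps to those of $\pi[j:k]$ are consistent; this is precisely the observation (already recorded in the excerpt) that $\pi$ and $\contract{\pi}$ share source and target locations, and it also covers the degenerate cases where $\pi[k+1:j'-1]$ or $\pi[k'+1:n]$ is empty.

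The two equalities then follow by unfolding the definition of $\contract{\pi}$ and reindexing. For the duration,
\[
\duration{\contract{\pi}} = \sum_{i=1}^{j-1} t_i + \sum_{\iota=0}^{m} (t_{j+\iota} + t_{j'+\iota}) + \sum_{i=k+1}^{j'-1} t_i + \sum_{i=k'+1}^{n} t_i,
\]
and using $\sum_{\iota=0}^{m}(t_{j+\iota}+t_{j'+\iota}) = \sum_{i=j}^{k} t_i + \sum_{i=j'}^{k'} t_i$ the five partial sums recombine into $\sum_{i=1}^{n} t_i = \duration{\pi}$. For the effect, the same unfolding gives the analogous expression with each $t_i$ replaced by $\rate(\src{e_i})(x) \cdot t_i$; the merged edge $(e_{j+\iota}, t_{j+\iota}+t_{j'+\iota})$ contributes $\rate(\src{e_{j+\iota}})(x)\cdot t_{j+\iota} + \rate(\src{e_{j'+\iota}})(x)\cdot t_{j'+\iota}$ thanks to $\src{e_{j+\iota}} = \src{e_{j'+\iota}}$, i.e.\ exactly the sum of the two contributions it replaces, while all edges outside the two blocks keep both their source location and their time stamp; hence $\effect{\contract{\pi}}{x} = \sum_{i=1}^{n} \rate(\src{e_i})(x)\cdot t_i = \effect{\pi}{x}$.

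I expect no genuine difficulty here: the only thing to be careful about is the index bookkeeping --- establishing $k - j = k' - j'$ and checking that the junction locations match so that $\contract{\pi}$ is a well-defined timed path --- after which the recombination of the sums and the induction lifting the result from $\contractname$ to $\contractname^*$ are routine.
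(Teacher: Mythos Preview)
Your proof is correct and is precisely the argument the paper has in mind; the paper itself does not spell out a proof of this lemma, merely remarking that ``it is easy to see'' that $\contractname^*$ preserves both duration and effect. Your one-step computation for $\contractname$ (using $e_{j+\iota}=e_{j'+\iota}$ so the source locations, hence rates, match) followed by the obvious induction to $\contractname^*$ is exactly the intended justification.
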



We are now ready to show, given a timed path $\pi$ (with
$\duration{\pi}\leq \frac{1}{\rmax+1}$ and without equality tests in
the guards), how to build a timed path $\contraction{\pi}$ that fully
preserves the values of the variable, as stated in
Proposition~\ref{prop:bounded-length}. The key ingredient to obtain
$\contraction{\pi}$ is to apply $\contractname^*$ to selected portions
of $\pi$, in such a way that for each edge $e$ that resets a variable for
the \emph{first} or the \emph{last} time along $\pi$, the time
distance between the occurrence of $e$ and the beginning of the timed
path is the same in both $\pi$ and $\contraction{\pi}$. 

The precise construction goes as follows. Let $\pi=(t_1, e_1),\ldots,
(t_n, e_n)$ be a timed path. For each variable $x$, we denote by
$S_x^\pi$ the set of positions $i$ such that $e_i$ is either the
\emph{first} or the \emph{last} edge in $\pi$ to reset $x$
(hence $|S_x^\pi|\in\{0,1,2\}$ for any $x$). Then, we decompose $\pi$
as: $\pi_1\cdot (t_{i_1}, e_{i_1})\cdot \pi_2 \cdot (t_{i_2},
e_{i_2})\cdots (t_{i_k},e_{i_k}) \cdot \pi_{k+1}$ with $\{i_1,\ldots,
i_k\}=\cup_x S_x^\pi$. From this decomposition of $\pi$, we let
$\contraction{\pi}=\contractstar{\pi_1}\cdot (t_{i_1}, e_{i_1})\cdot
\contractstar{\pi_2} \cdot (t_{i_2}, e_{i_2})\cdots (t_{i_k},e_{i_k})
\cdot \contractstar{\pi_{k+1}}$.

We first note that, thanks to Lemma~\ref{lem:bound-contract-star},
$\len{\contraction{\pi}}$ is bounded.

\begin{lemma}\label{lem:length-pi-prime}
  Let $\cH$ be an LHA with set of variable $X$, set of edges $\edges$ and
  set of location $\loc$, and let $\pi$ be a timed path of $\cH$. Then
  $\len{\contraction{\pi}}\leq 2\cdot |X|+(2\cdot |X|+1)\cdot
  \boundleq$.
\end{lemma}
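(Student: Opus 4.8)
The plan is simply to count, using the definition of $\contraction{\pi}$ together with Lemma~\ref{lem:bound-contract-star}. Recall that $\contraction{\pi}$ is obtained by writing $\pi=\pi_1\cdot (t_{i_1},e_{i_1})\cdot\pi_2\cdot (t_{i_2},e_{i_2})\cdots (t_{i_k},e_{i_k})\cdot\pi_{k+1}$, where $\{i_1,\dots,i_k\}=\bigcup_{x\in X}S_x^\pi$, and then replacing each block $\pi_j$ by $\contractstar{\pi_j}$ while leaving the $k$ singled-out steps $(t_{i_j},e_{i_j})$ untouched. Since length is additive under concatenation of timed paths, this gives $\len{\contraction{\pi}}=k+\sum_{j=1}^{k+1}\len{\contractstar{\pi_j}}$, and it remains to bound $k$ and each $\len{\contractstar{\pi_j}}$.

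First I would bound $k$. By definition $S_x^\pi$ consists of the positions of the first and the last edge of $\pi$ that resets $x$, so $|S_x^\pi|\le 2$ for every variable $x$, and hence $k=\bigl|\bigcup_{x\in X}S_x^\pi\bigr|\le\sum_{x\in X}|S_x^\pi|\le 2|X|$. Next, each $\pi_j$ is itself a timed path of $\cH$ (possibly empty, in which case $\contractstar{\pi_j}$ is empty and the bound is trivial), so Lemma~\ref{lem:bound-contract-star} applies and yields $\len{\contractstar{\pi_j}}\le\boundleq$ for each of the $k+1$ factors.

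Plugging both bounds into the identity above gives $\len{\contraction{\pi}}\le k+(k+1)\cdot\boundleq$. The right-hand side is $(\boundleq+1)\cdot k+\boundleq$, which is (weakly) increasing in $k$ since $\boundleq\ge 0$; therefore, using $k\le 2|X|$, we obtain $\len{\contraction{\pi}}\le 2|X|+(2|X|+1)\cdot\boundleq=\finalbound$, as claimed.

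There is essentially no obstacle here: the argument is pure bookkeeping on the definition of $\contraction{\cdot}$. The only two points to state carefully are that the $k+1$ intermediate blocks $\pi_j$ are genuine timed paths (so that Lemma~\ref{lem:bound-contract-star} is applicable to each of them, including the degenerate empty case), and the final monotonicity observation that lets one replace the a priori unknown value $k$ by its worst case $2|X|$.
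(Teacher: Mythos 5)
Your proof is correct and follows exactly the same route as the paper's (which simply cites $|\bigcup_x S^\pi_x|\leq 2|X|$ and Lemma~\ref{lem:bound-contract-star} for each block); you merely spell out the additivity of length over the decomposition and the monotonicity in $k$ that the paper leaves implicit.
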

\begin{proof}
   The Lemma stems from the fact that $|\cup_xS^\pi_x|\leq 2\cdot |X|$
  and that, for any~$j$: $\len{\contractstar{\pi_j}}\leq \boundleq$ by
  Lemma~\ref{lem:bound-contract-star}.\qed
\end{proof}


In order to obtain Proposition~\ref{prop:bounded-length}, it remains
to show that this construction can be used to build a run $\rho'$ that
is equivalent to the original run $\rho$. By
Lemma~\ref{lem:duration-preserved}, we know that
$\duration{\contractstar{\pi_j}}=\duration{\pi_j}$ for any~$j$. Hence,
the first and last resets of each variable happen at the same time
(relatively to the beginning of the timed path) in both $\pi$ and
$\contraction{\pi}$. Intuitively, preserving the time of occurrence of
the first reset (of some variable $x$) guarantees that $x$ will never
exceed $1$ along $\contraction{\pi}$, because
$\duration{\contraction{\pi}}=\duration{\pi}\leq
\frac{1}{\rmax+1}$. Symmetrically, preserving the last reset of some
variable $x$ guarantees that the final value of $x$ will be the same in
both $\pi$ and $\contraction{\pi}$. Moreover, we know (see
Lemma~\ref{lem:contract-preserves-effect}) that the contraction
function also preserves the value of the variables that are not reset.
Thanks to these results, we are now ready to prove
Proposition~\ref{prop:bounded-length}.

\begin{proof}[of Proposition~\ref{prop:bounded-length}]
  Let $\pi=\tpof{\rho}$ and let $\pi'$ denote $\contraction{\pi}$.  To
  prove the existence of $\rho'$, we will choose
  $\rho'=s_0\xrightarrow{\pi'}s$.  Let us first show that
  $\rof{s_0}{\pi'}\neq\bot$. Since $\pi$ and $\pi'$ contain no
  equality test, by ${\sf H_3}$, this amounts to showing that firing
  $\pi'$ from $s_0$ will always keep all the variable values $\leq 1$.

  Let us consider the decomposition of $\pi$ into: $\pi_1\cdot
  (t_{i_1}, e_{i_1})\cdot \pi_2 \cdot (t_{i_2}, e_{i_2})\cdots
  (t_{i_k},e_{i_k}) \cdot \pi_{k+1}$, as in the definition of ${\sf
    Contraction}$. For any $1\leq i\leq k$, let $s_i=(\ell_i,\val_i)$
  denote the state reached by the run
  $s_0\xrightarrow{\pi_1\cdot(t_{i_1}, e_{i_1})\cdots
    \pi_i}s_i$. Symmetrically, let $s_i'=(\ell_i,\val_i')$ denote the
  state reached by the run
  $s_0\xrightarrow{\contractstar{\pi_1}\cdot(t_{i_1}, e_{i_1})\cdots
    \contractstar{\pi_i}}s_i'$, assuming it exists. In that case, we
  observe that, for any variable $x$ which is not reset along
  $\contractstar{\pi_1}\cdot(t_{i_1}, e_{i_1})\cdots
  \contractstar{\pi_i}$, we have: $\val_i(x)=\val_i'(x)$, by
  Lemma~\ref{lem:contract-preserves-effect}.

  Then, we proceed by contradiction. Let $(t_j, e_j)$ be an element
  from $\pi'$, let $x$ be a variable
  such that $s_0\xrightarrow{\pi'[1:j]}(\ell',\val')$ and
  $\val'(x)+\rate(\ell')(x)\cdot t_{j+1} >1$. We first observe that, once
  $x$ has been reset along $\pi'$, its value can never exceed $1$
  because
  $\duration{\pi'}=\duration{\pi}\leq\frac{1}{\rmax+1}$. Hence,
  $(t_j,e_j)$ must occur \emph{before} the first reset of $x$ along
  $\pi'$. We distinguish two cases:
  \begin{enumerate}
  \item In the case where $(t_j,e_j)$ occurs in some part
    $\contractstar{\pi_{i_j}}$ of the decomposition of $\pi'$, we know
    that
    $\val_{i_j-1}'(x)+\effect{(t_{i_j},e_{i_j})\contractstar{\pi_{i_j}}}{x}>1$,
    since $x$ is not reset along $\contractstar{\pi_{i_j}}$. However,
    we have:
    \begin{align*}
      \val_{i_j}(x)&= \val_{i_j-1}(x)+\effect{(t_{i_j},e_{i_j})\cdot\pi_{i_j}}{x}&\textrm{def. and $x$ not reset}\\
      &=\val_{i_j-1}'(x)+\effect{(t_{i_j},e_{i_j})\cdot\pi_{i_j}}{x}&\textrm{observation above}\\
      &=\val_{i_j-1}'(x)+\effect{(t_{i_j},e_{i_j})\cdot\contractstar{\pi_{i_j}}}{x}&\textrm{Lemma~\ref{lem:contract-preserves-effect}}\\
      &>1
    \end{align*}
    Hence, $\rho$ reaches a valuation where the value of $x$ exceeds
    $1$. Contradiction.
  \item The case where $(t_j, e_j)=(t_{i_k}, e_{i_k})$ for some $i_k$
    is treated similarly and leads to the same contradiction.
  \end{enumerate}

  Now, we are sure that $\rho'=s_0\xrightarrow{\pi'}(\ell',\val')$ is
  indeed a $1$-variable bounded run. By
  Lemma~\ref{lem:length-pi-prime}, it has the adequate length. It
  remains to show that $\rho=s_0\xrightarrow{\pi}(\ell,\val)$ implies
  $\ell'=\ell$ and $\val=\val'$. The first point is true by definition
  of $\pi'$. For any variable $x$, let $i_x$ denote the element
  $(t_{i_x},e_{i_x})$ of $\pi$ where the \emph{last} reset of $x$
  occurs along $\pi$ (and thus along $\pi'$). We observe that
  $\val(x)=\effect{\pi_{i_x+1}\cdot(t_{i_x+1},e_{i_x+1})\cdots
    \pi_{k+1}}{x}$ and that
  $\val'(x)=\effect{\contractstar{\pi_{i_x+1}}\cdot(t_{i_x+1},e_{i_x+1})\cdots\contractstar{\pi_{k+1}}}{x}$
  since $x$ is not reset anymore along those two suffixes.  By
  Lemma~\ref{lem:contract-preserves-effect}, we have
  $\val(x)=\val'(x)$.  \qed
\end{proof}

\paragraph{Handling `$<$' and non-singular rates.} Let us now briefly
explain how we can adapt the construction of this section to cope with
strict guards and non-singular rates. First, when the RHA $\cH$
contains strict guards, the RHA $\cH'$ of
Proposition~\ref{prop:syntax} will also contain guards with atoms of
the form $x<1$. Thus, when building a `contracted path' $\rho'$
starting from a path $\rho$ (as in the proof of
Proposition~\ref{prop:bounded-length}), we need to ensure that these
strict guards will also be satisfied along $\rho'$. It is easy to use
similar arguments to establish this: if some guard $x<1$ is not
satisfied in $\rho'$, this is necessarily before the first reset of
$x$, which means that the guard was not satisfied in $\rho$ either. On
the other hand, to take non-singular rates into account, we need to
adapt the definition of timed path. A timed path is now of the form
$(t_0, r_0, e_0)\cdots (t_n,r_n,e_n)$, where each $r_i$ is a vector of
reals of size $|X|$, indicating the actual rate that was chosen for
each variable when the $i$-th continuous step has been taken. It is
then straightforward to adapt the definitions of $\contractname$,
${\sf Effect}$ and ${\sf Contraction}$ to take those rates into
account and still keep the properties stated in
Lemma~\ref{lem:bound-contract-star} and~\ref{lem:length-pi-prime} and
in Proposition~\ref{prop:bounded-length} (note that we need to rely on
the convexity of the invariants in RHA to ensure that proper rates can
be found when building $\contract{\pi}$). 

\begin{theorem}\label{theo:tb-reach-decidable}
  The time-bounded reachability problem is decidable for the class of
  rectangular hybrid automata with non-negative rates.
\end{theorem}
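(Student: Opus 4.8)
The plan is to assemble the pieces already developed in the section into a decision procedure. Given an RHA $\cH$ with non-negative rates, a goal location $\goal$, and a time bound $\tb$, I would first invoke Proposition~\ref{prop:syntax} to reduce, without loss of generality, to the case of a singular RHA $\cH'$ satisfying hypotheses $\sf H_1$ through $\sf H_3$, together with a target set $S$, such that $\cH$ has a $\tb$-time-bounded run to $\goal$ iff $\cH'$ has a strict, $1$-variable-bounded, $\tb$-time-bounded run to $S$. (The handling of strict guards and non-singular rates sketched at the end of the section shows this reduction is genuinely without loss of generality.) Thus it suffices to decide the existence of such a run in $\cH'$.

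The next step is to bound the length of a witness run. By Proposition~\ref{prop-equal}, any $\tb$-time-bounded run of $\cH'$ uses at most $|X|\cdot\rmax\cdot\tb$ transitions guarded by an equality; these split the run into at most $|X|\cdot\rmax\cdot\tb+1$ maximal \emph{equality-free} subruns. After adding a $\true$-guarded self-loop without reset on every location (harmless for reachability), each such subrun can be further cut into at most $\tb\cdot(\rmax+1)$ pieces of duration at most $\frac{1}{\rmax+1}$. To each such piece I apply Proposition~\ref{prop:bounded-length}: since it is $1$-variable-bounded, $\frac{1}{\rmax+1}$-time-bounded and equality-free, it can be replaced by an equivalent run (same source and target states) of length at most $\finalbound$. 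Concatenating the contracted pieces and reinserting the equality transitions yields a run with the same first and last states, the same duration, still strict and $1$-variable-bounded, whose total length is bounded by an explicitly computable function $K_\tb^\cH$ of $|X|$, $|\loc|$, $|\edges|$, $\rmax$ and $\tb$. Hence $S$ is reachable in $\cH'$ within $\tb$ time units iff it is reachable by a run of length at most $K_\tb^\cH$.

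Finally, the existence of a run of bounded length $N\leq K_\tb^\cH$ reaching $S$ is expressible as a sentence in the first-order theory of real addition $\langle\reals,+,<,0,1\rangle$: existentially quantify the $N$ time delays $t_0,\dots,t_{N-1}$ and, for the non-singular case, the chosen rate vectors, and take a finite disjunction over all candidate timed paths $e_0,\dots,e_{N-1}$ (paths in the finite graph of $\cH'$) of length at most $N$; for each such path the conjunction of the guard, invariant, reset and time-bound constraints along it is a linear arithmetic formula over the reals, and reaching $S$ is a condition on the final location. This sentence has length at most exponential in the size of $\cH'$ and $\tb$, and its truth is decidable in \textsf{EXPSPACE} by~\cite{FR75}; therefore the time-bounded reachability problem is decidable for RHA with non-negative rates.

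The main obstacle is concentrated in Proposition~\ref{prop:bounded-length}, whose proof is the technical heart of the section and has already been carried out; the remaining work in the theorem is the bookkeeping of how the $\frac{1}{\rmax+1}$-time-bounded, equality-free contraction lemma composes with the equality bound of Proposition~\ref{prop-equal} to yield a global length bound, and the routine (if notationally heavy) translation of ``a bounded-length run exists'' into a first-order sentence over the reals. A minor point to check is that splitting a run at equality transitions and at $\frac{1}{\rmax+1}$-duration boundaries, contracting the pieces, and re-gluing them indeed preserves strictness, $1$-variable-boundedness and the overall duration --- all of which follow because Proposition~\ref{prop:bounded-length} preserves endpoints and duration exactly.
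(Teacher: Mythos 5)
Your proposal is correct and follows essentially the same route as the paper: reduce via Proposition~\ref{prop:syntax}, bound the witness length via Propositions~\ref{prop-equal} and~\ref{prop:bounded-length}, and encode bounded-length reachability in ${\sf FO}(\reals,\leq,+)$, decidable by~\cite{FR75}. In fact you spell out more carefully than the paper's own (very terse) proof how the equality bound and the $\frac{1}{\rmax+1}$-duration splitting compose to give the global length bound $K_\tb^\cH$.
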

\begin{proof}
   Let $\cH$ be an RHA with non-negative rates, let $\goal$ be one of
  its location, let $\thebound$ be a natural value, and let us show
  how to determine whether $\cH$ admits a $\thebound$-time-bounded run
  reaching $\goal$.  By Proposition~\ref{prop:syntax} (and taking into
  account the above remarks to cope with strict guards and rectangular
  rates), this amounts to determining the exists of a strict $1$-variable
  bounded run reaching $\goal'$ in $\cH'$ (where $\goal'$ and $\cH'$
  are defined as in Proposition~\ref{prop:syntax}). By
  Proposition~\ref{prop:bounded-length}, this can be done by
  considering only the runs of length at most $\finalbound$ in
  $\cH'$. This question can be answered by building an ${\sf
  FO}(\reals,\leq,+)$ formula $\varphi_{\cH'}$ which is
  satisfiable iff $\rho'$ exists. Since the satisfiability of ${\sf
  FO}(\reals,\leq,+)$ is decidable \cite{FR75}, we obtain the
  theorem.\qed
\end{proof}



\section{Undecidability Results\label{sec:undec-bound-time}}

In this section, we show that the time-bounded reachability problem
for linear hybrid automata becomes undecidable if either both positive
and negative rates are allowed, or diagonal constraints are allowed in
the guards.  Along with the decidability result of
Section~\ref{sec:decid}, these facts imply that the class of
rectangular hybrid automata having positive rates only and no diagonal
constraints forms a maximal decidable class.  Our proofs rely on
reductions from the halting problem for Minsky two-counters machines.

A \emph{two-counter machine}~$M$ consists of a finite set of control
states $Q$, an initial state $q_I \in Q$, a final state $q_F \in Q$, a
set $C$ of counters ($\abs{C} = 2$) and a finite set $\delta_M$ of
instructions manipulating two integer-valued counters. Instructions
are of the form:
\begin{itemize}
\item[$q:$] $c := c + 1$ {\bf goto } $q'$, or
\item[$q:$] {\bf if } $c = 0$  {\bf then goto } $q'$  {\bf else }$c := c - 1$ {\bf goto $q''$}.
\end{itemize}
Formally, instructions are tuples $(q, \alpha, c, q')$ where $q,q' \in
Q$ are source and target states respectively, the action $\alpha \in
\{inc, dec, 0?\}$ applies to the counter $c \in C$.  

A \emph{configuration} of~$M$ is a pair $(q, v)$ where $q \in Q$ and
$v: C \to \nat$ is a valuation of the counters.  An \emph{accepting
  run} of~$M$ is a finite sequence $\pi = (q_0,v_0) \delta_0 (q_1,v_1)
\delta_1 \dots$ $\delta_{n-1} (q_n,v_n)$ where $\delta_i = (q_i,
\alpha_i, c_i, q_{i+1}) \in \delta_M$ are instructions and $(q_i,v_i)$
are configurations of~$M$ such that $q_0 = q_I$, $v_0(c) = 0$ for all
$c \in C$, $q_n = q_F$, and for all $0 \leq i < n$, we have
$v_{i+1}(c) = v_i(c)$ for $c \neq c_i$, and (i)~if $\alpha = inc$,
then $v_{i+1}(c_i) = v_i(c_i) + 1$, (ii)~if $\alpha = dec$, then
$v_i(c_i) \neq 0$ and $v_{i+1}(c_i) = v_i(c_i) - 1$, and (iii)~if
$\alpha = 0?$, then $v_{i+1}(c_i) = v_i(c_i) = 0$.
The \emph{halting problem} asks, given a two-counter machine~$M$,
whether~$M$ has an accepting run.  This problem is
undecidable~\cite{Mi67}.

\paragraph{{\bf Undecidability for RHA with negative rates.}}


Given a two-counter machine~$M$, we construct an RHA $\H_M$ (thus
without diagonal constraints) such that~$M$ has an accepting run if
and only if the answer to the time-bounded reachability problem for
$(\H_M, \goal)$ with time bound $1$ is {\sc Yes}. The construction
of $\H_M$ crucially makes use of both positive and negative rates.

\begin{theorem}
  The time-bounded reachability problem is undecidable for rectangular
  hybrid automata even if restricted to singular rates.
\end{theorem}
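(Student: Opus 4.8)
The plan is to reduce the halting problem for two-counter machines to the time-bounded reachability problem. Given $M=(Q,q_I,q_F,C,\delta_M)$ with $C=\{c_1,c_2\}$, I would build an RHA $\H_M$ with a distinguished location $\goal$ (corresponding to $q_F$) such that $M$ has an accepting run iff $\H_M$ has a run of duration at most $1$ from $(\init,\initval)$ to $\goal$. Since diagonal constraints are forbidden, every guard is rectangular, and since we want singular rates, each $\dot x$ is pinned to a single constant per location; the extra power we exploit is that some of these constants are \emph{negative}.

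The core idea is to simulate the a priori unbounded instruction sequence of $M$ inside the interval $[0,1)$ by devoting to the $i$-th simulated step a time slot whose length is proportional to $2^{-i}$, so that the whole simulation fits in bounded time regardless of how long the accepting run of $M$ is. A distinguished \emph{scale} variable $u$ holds the current unit $2^{-i}$ at the start of step $i$, and a counter $c$ holding value $n$ is represented by a variable $x_c$ with $x_c = n\cdot u$. Thus $x_c \le 1$ throughout (after $i$ steps the counter is at most $i$, and $i\cdot 2^{-i}\le 1$), the test $c=0$ becomes the rectangular test $x_c=0$, and decrement with $n\ge 1$ is consistent with rates remaining bounded. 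An $inc$ (resp.\ $dec$) instruction is implemented by transferring the value of $u$ \emph{onto} (resp.\ \emph{off of}) $x_c$: we run $x_c$ and a private copy of $u$ at opposite unit rates until the copy reaches $0$ (a rectangular guard), which adds/subtracts exactly the old value of $u$ — the $dec$ gadget is precisely where a negative rate on $x_c$ is essential — while a second, simultaneously running copy restores $u$ afterwards. Passing from step $i$ to step $i+1$ requires a \emph{rescale} gadget that halves $u$ and every $x_c$ (the integer values are unchanged but the unit shrinks), realised by the same ``race two variables to $0$'' device run at rates $(-2,1)$. The control graph of $\H_M$ mirrors $\delta_M$, with gadgets for $inc$, $dec$, $0?$, a rescale gadget inserted between consecutive instructions, and a global clock $g$ with $\dot g=1$ used only to anchor the slots, slot $i$ occupying roughly $g\in[1-2^{-(i-1)},\,1-2^{-i}]$, so that reaching the $q_F$-location before $g=1$ is exactly $1$-time-bounded reachability of $\goal$.

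The two directions of the reduction are then as follows. If $M$ halts after $N$ steps, run the gadgets honestly: the slot durations form a geometric series bounded by $1$, so $\goal$ is reached within time $1$. Conversely — and this is where the time bound does the real work — any run of $\H_M$ reaching $\goal$ with duration at most $1$ must faithfully encode an accepting run of $M$: the gadget structure forces the traversed instruction sequence to be a legal computation (a zero test cannot be cheated because $x_c=0$ is an exact guard and $x_c$ is never negative along an honest slot), and there is no slack left in the time budget to do anything other than the prescribed transfers, so the variables genuinely track $n\cdot 2^{-i}$ at every step.

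I expect the main obstacle to be establishing \emph{faithfulness} of the encoding under the no-diagonal restriction: all arithmetic on the counters — adding, subtracting, and halving a dynamically shrinking unit — must be carried out using only rectangular guards via the race-to-$0$ trick, and one must verify that the location invariants keeping all variables in $[0,1]$ are maintained, that the auxiliary copies of $u$ are correctly bookkept across gadgets, and that the slot lengths genuinely shrink so the total duration stays below $1$. In the converse direction one must moreover rule out that a run reaches $\goal$ by an unintended shortcut that the (weak) rectangular guards fail to forbid; pinning this down carefully, using that any deviation either violates a guard/invariant or overruns the time bound, is the delicate part of the argument.
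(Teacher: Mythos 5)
Your proposal is correct in its essentials and follows the same overall architecture as the paper's proof (reduction from two-counter machines, geometrically shrinking time slots so that an unbounded computation fits into bounded time, and two-variable ``race to an equality guard'' gadgets whose negative rates do the arithmetic while keeping all guards rectangular), but the counter encoding is genuinely different. The paper encodes $v(c)$ in an \emph{exponent}: $x_c = 4^{-(i+v(c))}$ at the $i$-th tick, so that increment is ``divide $x_c$ by $16$'', maintenance is ``divide by $4$'', and the zero test becomes the comparison $x_c = x_t$ with a tick-generator variable --- which, to avoid a diagonal constraint, has to be smuggled into the divisor gadget as a guard $x_t=0$ taken at the instant $x_c$ hits $0$. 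You instead encode $v(c)$ \emph{linearly} in a shrinking unit, $x_c = v(c)\cdot 2^{-i}$, with add/subtract-one-unit and halve-everything gadgets. This buys you a strictly simpler zero test (the single rectangular guard $x_c=0$) at the price of slot durations that are $\Theta\left(v(c)\cdot 2^{-i}\right)$ rather than uniformly geometric; since $v(c)\le i$ after $i$ steps the total duration is still $\sum_i O\left(i\cdot 2^{-i}\right)$, which converges, but to a constant larger than $1$ --- so you should either drop the claim that everything fits in one time unit (any fixed $\tb\in\IN$ suffices for the theorem) or rescale the rates. Relatedly, the explicit anchoring of slot $i$ to $g\in[1-2^{-(i-1)},1-2^{-i}]$ is neither expressible with integer-endpoint rectangular guards nor needed: as in the paper, the gadgets' own equality guards determine the timing deterministically, and the global clock plays no role beyond the overall duration bound.
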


\begin{proof}
The reduction is as follows.  The execution steps of $M$ are simulated
in $\H_M$ by a (possibly infinite) sequence of \emph{ticks} within one
time unit. The ticks occur at time $t_0 = 0, t_1 = 1-\frac{1}{4}, t_2
= 1-\frac{1}{16}, \dots$ The counters are encoded as follows.  If the
value of counter $c \in C$ after $i$ execution steps of $M$ is $v(c)$,
then the variable $x_c$ in $\H_M$ has value $\frac{1}{4^{i+v(c)}}$ at
time $t_i$.  Note that this encoding is time-dependent and that the
value of $x_c$ at time $t_i$ is always smaller than $1-t_i =
\frac{1}{4^i}$, and equal to $\frac{1}{4^i}$ if the counter value is
$0$.  To maintain this encoding (if a counter~$c$ is not modified in
an execution step), we need to divide $x_c$ by~$4$ before the next
tick occurs.  We use the divisor gadget in
\figurename~\ref{fig:gadget-div} to do this.  Using the diagram in the
figure, it is easy to check that the value of variable~$x_c$ is
divided by $k^2$ where $k$ is a constant used to define the variable
rates.  In the sequel, we use $k=2$ and $k=4$ (i.e., division by $4$
and by $16$ respectively).  Note also that the division of $\val(x_c)$
by $k^2$ takes $\val(x_c) \cdot (\frac{1}{k} + \frac{1}{k^2})$ time
units, which is less than $\frac{3\cdot \val(x_c)}{4}$ for $k \geq 2$.
Since $\val(x_c) \leq \frac{1}{4^i}$ at step $t_i$, the duration of
the division is at most $\frac{3}{4^i} = t_{i+1} - t_i$, the duration
of the next tick.

We also use the divisor gadget on a variable $x_t$ to construct an
automaton $\atick$ that generates the ticks, as in
\figurename~\ref{fig:gadget-tick}.  We take $k=2$ and we connect and
merge the incoming and outgoing transition of the divisor gadget.
Initially, we require $x_t = 1$. Since division of $x_t$ by $k^2 = 4$
takes $\val(x_t) \cdot (\frac{1}{k} + \frac{1}{k^2}) = \frac{3\cdot
  \val(x_t)}{4}$ time units, it turns out that the value of $x_t$ is
always $1-t_i = \frac{1}{4^i}$ at time $t_i$.  Therefore, we can
produce infinitely many ticks within one time unit.

The automaton $\H_M$ is the product of $\atick$ with the automaton
constructed as follows.  Assume the set of counters is $C= \{c,d\}$.
For each state $q$ of $M$, we construct a location $\l_q$ with rate
$\dot{x}_c = 0$ and $\dot{x}_d = 0$.
For each instruction $(q,\cdot,\cdot,q')$ of $M$, we construct a
transition from location $\l_q$ to $\l_{q'}$ through a synchronized
product of division gadgets to maintain the encoding, as shown in
\figurename~\ref{fig:gadget-inc} and
\figurename~\ref{fig:gadget-zero}. For example, the instruction
$(q,inc,c,q')$ is simulated by dividing $x_c$ by $16 = 4^2$ and $x_d$
by $4$, which transforms for instance $x_c = \frac{1}{4^{i+n}}$ into
$x'_c = \frac{1}{4^{i+n+2}}$.  The decrement is implemented similarly.
Note that the decrement of $c$ requires division by $1$ which is
trivially realized by a location with rate $\dot{x}_c = 0$. Finally,
the zero test is implemented as follows. A counter $c$ has value $0$
in step $i$ if $x_c = 1-t_i = \frac{1}{4^i}$.  Therefore, it suffices
to check that $x_c = x_t$ to simulate a zero test.  To avoid diagonal
constraints, we replace $x_c = x_t$ by a test $x_t = 0$ on the
transition guarded by $x_c = 0$ in the divisor gadget for $x_c$ (as
suggested in \figurename~\ref{fig:gadget-zero}).

The set $\goal = \{\l_{q_F}\}$ contains the location corresponding to
the final state $q_F$ in $M$.  By the above arguments, there is a
one-to-one mapping between the execution of~$M$ and the run of
$\H_M$. In particular, the counter values at step $i$ are correctly
encoded at time $t_i$. Therefore, the location $l_{q_F}$ is reachable
in $\H_M$ within one time unit if and only if $M$ has an accepting run
reaching $q_F$.\qed

\begin{figure}[!tb]
  \begin{center}
    \hrule
    \begin{picture}(130,30)(0,1)


\node[Nframe=n](phantom0)(-4,15){}
\node[Nframe=n](phantom1)(86,15){}

\node[Nmarks=i, Nw=15, Nh=12, Nmr=2](n0)(24,15){$\begin{array}{l@{\,=\,}l}\dot{x} & -k \\ \dot{y} & 1  \end{array}$}
\node[Nmarks=n, Nw=15, Nh=12, Nmr=2](n1)(58,15){$\begin{array}{l@{\,=\,}l}\dot{x} & 1 \\ \dot{y} & -k  \end{array}$}

\node[Nframe=y, dash=1{0}, Nw=72, Nh=20, Nmr=2](box)(41,15){}
\node[Nframe=n](label)(72,7.5){$x/k^2$}



\drawedge[ELpos=50, ELside=l, curvedepth=0](phantom0,n0){$y=0$}

\drawedge[ELpos=50, ELside=l, curvedepth=0](n0,n1){$x=0$}

\drawedge[ELpos=47, ELdist=1, ELside=l, curvedepth=0](n1,phantom1){$y=0$}

\gasset{Nw=0,Nh=0,Nmr=0,loopdiam=6}

\drawline[AHnb=1, dash=1{0}](91,3)(91,30)
\drawline[AHnb=1, dash=1{0}](88,3)(121,3)

\drawline[AHnb=0](90,27)(92,27)

\drawline[AHnb=0](91,27)(103,3)(109,9)
\drawline[AHnb=0](91,3)(103,15)(109,3)

\drawline[AHnb=0, dash={0.2 0.5}0](103,3)(103,15)
\drawline[AHnb=0, dash={0.2 0.5}0](109,3)(109,9)

\node[Nframe=n](label)(96,23){$x$}
\node[Nframe=n](label)(94,9){$y$}

\node[Nframe=n](label)(87,27){$\val(x)$}
\node[Nframe=n](label)(113,11){$\val(x)/k^2$}

\node[Nframe=n](label)(116,5){{\sf time}}


\end{picture}
    \hrule
    \caption{Gadget for division of a variable $x$ by $k^2$. 
The variable $y$ is internal to the gadget. 
The duration of the division is $v \cdot (\frac{1}{k} + \frac{1}{k^2})$. 
The guard ($x_t = 0$) has no influence here, and it is used only when $k=2$.
\label{fig:gadget-div}}

  \end{center}
\end{figure}
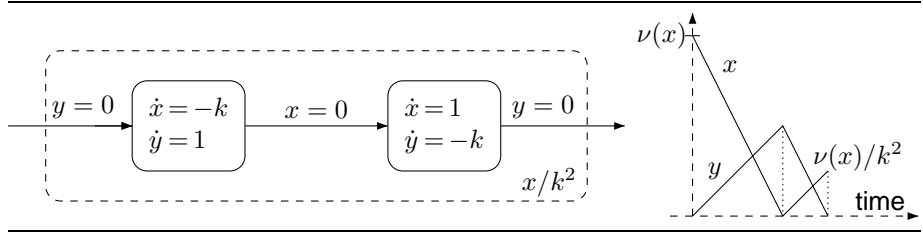

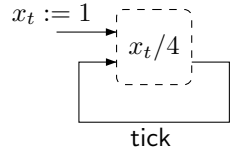
\begin{figure}[!tb]
  \begin{center}
    \hrule
    \begin{picture}(30,22)(0,1)


\node[Nframe=y, dash=1{0}, Nw=10, Nh=10, Nmr=1](box)(15,16){$x_t/4$}

\node[Nframe=n, dash=1{0}, Nw=10, Nh=10, Nmr=1](boxx)(15,18){}
\node[Nframe=n](dummy)(-2,18){}
\drawedge[ELpos=20](dummy,boxx){$x_t:=1$}

\drawline[AHnb=1](20,14)(25,14)(25,6)(5,6)(5,14)(10,14)
\node[Nframe=n](label)(15,4){\tick}

\end{picture}
    \hrule
    \caption{Tick-gadget to produce infinitely many ticks within one time unit.   \label{fig:gadget-tick}}

  \end{center}
\end{figure}

\begin{figure}[!tb]
  \begin{center}
    \hrule
    \begin{picture}(105,22)(0,1)


\node[Nmarks=n, Nw=17, Nh=15, Nmr=2](n0)(10,12){$\begin{array}{cc} q  \\[+2pt] \dot{x_c} = 0 \\ \dot{x_d} = 0  \end{array}$}
\node[Nmarks=n, Nw=17, Nh=15, Nmr=2](n1)(80,12){$\begin{array}{cc} q' \\[+2pt] \dot{x_c} = 0 \\ \dot{x_d} = 0  \end{array}$}
\node[Nframe=n](phantom)(100,12){}
\drawedge[ELpos=65, ELdist=1.5, ELside=l, curvedepth=0](n1,phantom){\tick}

\node[Nmarks=n, Nw=30, Nh=15, Nmr=2](nd)(45,12){}
\node[Nframe=y, dash=1{0}, Nw=10, Nh=10, Nmr=1](box)(37,12){$x_c/16$}
\node[Nframe=n](label)(45,12){$\times$}
\node[Nframe=y, dash=1{0}, Nw=10, Nh=10, Nmr=1](box)(53,12){$x_d/4$}

\drawedge[ELpos=40, ELside=l, curvedepth=0](n0,nd){\tick}
\drawedge[ELpos=50, ELside=l, curvedepth=0](nd,n1){}


\end{picture}
    \hrule
    \caption{Increment-gadget to simulate instruction $(q,inc,c,q')$.  \label{fig:gadget-inc}}

  \end{center}
\end{figure}
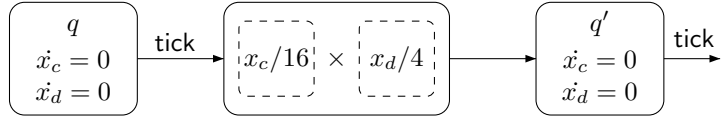

\begin{figure}[!tb]
  \begin{center}
    \hrule
    \begin{picture}(105,22)(0,1)


\node[Nmarks=n, Nw=17, Nh=15, Nmr=2](n0)(10,12){$\begin{array}{cc} q  \\[+2pt] \dot{x_c} = 0 \\ \dot{x_d} = 0  \end{array}$}
\node[Nmarks=n, Nw=17, Nh=15, Nmr=2](n1)(85,12){$\begin{array}{cc} q' \\[+2pt] \dot{x_c} = 0 \\ \dot{x_d} = 0  \end{array}$}
\node[Nframe=n](phantom)(105,12){}
\drawedge[ELpos=65, ELdist=1.5, ELside=l, curvedepth=0](n1,phantom){\tick}

\node[Nmarks=n, Nw=30, Nh=15, Nmr=2](nd)(50,12){}
\node[Nframe=y, dash=1{0}, Nw=10, Nh=10, Nmr=1, NLdist=1](box)(42,12){$x_c/4$}
\node[Nframe=n, dash=1{0}, Nw=10, Nh=10, Nmr=1, NLdist=-3](box)(42,12){{\scriptsize $x_t = 0$}}
\node[Nframe=n](label)(50,12){$\times$}
\node[Nframe=y, dash=1{0}, Nw=10, Nh=10, Nmr=1](box)(58,12){$x_d/4$}

\drawedge[ELpos=40, ELside=l, curvedepth=0](n0,nd){\tick}
\drawedge[ELpos=40, ELside=r, curvedepth=0](n0,nd){($x_c = x_t$)}
\drawedge[ELpos=50, ELside=l, curvedepth=0](nd,n1){}


\end{picture}
    \hrule
    \caption{Zero-gadget to simulate instruction $(q,?0,c,q')$. 
We do use the guard $x_t = 0$ in the divisor gadget for $x_c$,
in order to simulate the diagonal guard $(x_c = x_t)$.  \label{fig:gadget-zero}}

  \end{center}
\end{figure}
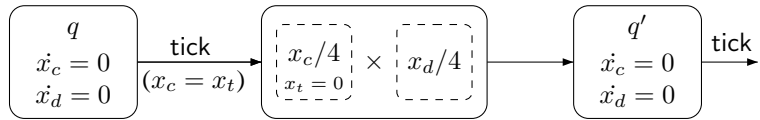

\end{proof}


\paragraph{{\bf Undecidability with diagonal constraints.}}

We now show that diagonal constraints also leads to undecidability.
The result holds even if every variable has a positive, singular,
fixed rate.

\begin{theorem}
  The time-bounded reachability problem is undecidable for LHA that
  use only singular, strictly positive, and fixed-rate variables.
\end{theorem}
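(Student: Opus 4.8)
The plan is to reduce the halting problem for two-counter machines to the time-bounded reachability problem, reusing the overall architecture of the previous construction: a time bound of $1$, a stream of ticks at times $t_i = 1 - 4^{-i}$ converging to $1$, a time-dependent encoding in which counter $c$ with value $v(c)$ after $i$ steps is recorded by a variable $x_c = 4^{-(i+v(c))}$ at time $t_i$, and divisor gadgets that rescale the $x_c$ between consecutive ticks (division by $16$ for an increment, by $4$ for a ``keep'', by $1$ for a decrement, and so on). Negative rates were used there only inside the divisor gadget, where a variable and an internal variable receive rates $(-k,1)$ and then $(1,-k)$ in two successive phases, the rescaling being realized by their zero-crossings. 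The idea here is to emulate every negatively-sloped variable by the \emph{difference} of two variables with fixed positive rates: if $\dot a = p$, $\dot b = q$ and $p<q$, then $a-b$ decreases at the constant rate $p-q<0$; a test ``this quantity is $0$'' becomes the diagonal guard $a = b$, and ``this quantity is positive'' becomes $a > b$.

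Concretely, I would first rebuild the divisor gadget with positive rates. Its descending phase (``$x$ at rate $-k$, internal clock $y$ at rate $1$, until $x=0$'') is realized with $x$ represented as $a_1 - b_1$, $\dot a_1 = 1$, $\dot b_1 = k+1$ (slope $-k$), $y$ a genuine clock of rate $1$ reset to $0$ on entry, and the guard $x=0$ replaced by $a_1 = b_1$. Because the automaton must be fixed-rate, the ascending phase cannot reuse these variables: I would allocate fresh ones, $x$ becoming a genuine clock reset to $0$ on entry and $y$ becoming a difference $c_2 - d_2$ with $\dot c_2 = 1$, $\dot d_2 = k+1$, the value of $y$ being handed over by the update $c_2 := y,\ d_2 := 0$ and the guard $y = 0$ becoming $c_2 = d_2$; on exit, the (positive) value of $x$ is handed to the next gadget the same way. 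Such copy-updates are legitimate LHA updates (they are update constraints of the form $u' - w = 0$), and with $k\in\{2,4\}$ all rates are small positive integers, so the construction stays inside the stated class. The ``multiply-by-$k$'' variant, needed to restore a counter variable that was driven to $0$ only in order to probe it, is obtained symmetrically.

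Next I would assemble $\H_M$ exactly as before: an automaton $\atick$ built from a division-by-$4$ gadget on a tick variable $x_t$, so that the difference representing $x_t$ equals $4^{-i}$ at tick $i$ and the gadget's own duration $\frac34\cdot4^{-i}$ is precisely $t_{i+1}-t_i$, taken in product with, for each instruction of $M$, a gadget that rescales $x_c$ and $x_d$ between two ticks. For the zero test ``$c=0$?'', rather than comparing $x_c$ and $x_t$ directly (which would need four distinct variables), I keep the device of the negative-rates proof: inside the divisor gadget for $x_c$, at the instant its descending phase reaches $a_c = b_c$ (that is, $x_c=0$, after time $x_c/2$), one tests the representation of $x_t$ for being $0$ (guard $a_t = b_t$) on the ``then'' branch and for being positive (guard $a_t > b_t$) on the ``else'' branch; since $x_t$'s own gadget is still in its descending phase at that instant (it lasts $x_t/2\ge x_c/2$), the value of $x_t$ there is $x_t - x_c$, which is $0$ iff $v(c) = 0$. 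Thus every comparison used is a diagonal guard with offset $0$ between a single pair of variables, together with the rectangular guards and invariants already present, and every variable has a fixed, singular, strictly positive rate.

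Finally I would verify correctness: a faithful simulation of a halting run of $M$ reaches $\goal = \{\l_{q_F}\}$ within one time unit, because the $i$-th instruction is simulated inside $[t_i,t_{i+1}]$, whose length dominates the duration of each divisor gadget used there ($\le \frac34 x_c\le \frac34 x_t$ for a keep or an increment, and $\le x_c\le\frac34 x_t$ for a decrement since then $v(c)\ge1$); conversely, the singular rates and the equality guards force every run that reaches $\goal$ within time $1$ to be such a faithful simulation, so $\goal$ is time-bounded reachable if and only if $M$ halts. I expect the main obstacle to be bookkeeping rather than anything conceptual: in a fixed-rate model, essentially every variable that would need two different slopes --- or slope $0$, which is equally forbidden since rates must be strictly positive --- has to be split into several variables linked by copy-updates, and one must check that all the requisite slopes can simultaneously be obtained as integer differences $p-q$, that the value hand-overs keep the encoding exact, and that $\atick$ and the counter gadgets stay synchronized so that the zero test reads $x_t$ while $\atick$ is provably still in its descending phase.
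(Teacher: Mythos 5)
Your reduction takes a genuinely different route from the paper's. The paper abandons the tick/$4^{-i}$ machinery of the negative-rate proof altogether: it first replaces each counter $c$ by two monotone counters with $v(c)=v(c_{\mathrm{top}})-v(c_{\mathrm{bot}})$, so that decrementation never has to be simulated; it stores $2^{-v(c_{\mathrm{bot}})}$ as the difference $|x-y|$ of two variables that both have rate $1$ in every location; and it halves or maintains that difference in ``rounds'' by racing $x$ and $y$ against auxiliary variables of fixed rates $2$ and $3$, resetting to zero at the diagonal crossings $x=z$, $x=w$, $y=z$. A nondeterministic initialization phase guesses the length $m$ of the accepting run and pre-increments the auxiliary counters to roughly $\log_2 m$, which bounds the total duration by $3$ rather than $1$. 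What your approach buys is uniformity (one encoding serves both undecidability proofs, and the time bound stays at $1$); what the paper's buys is that only resets to $0$ and four small rates per counter are ever needed.

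The one point in your construction that is not mere bookkeeping is the value hand-over. Because rates are fixed and strictly positive, a quantity held as $u-w$ with $\dot u=\dot w=1$ cannot be turned into a quantity descending at rate $k$ without either a three-variable guard (which is not a diagonal constraint) or a copy-update such as $w'=w$ or $c_2'=y$. You rely on such copy-updates explicitly. They are syntactically admissible in $\updateon{X}$ as the paper defines it, but the edge-step semantics the paper actually states only covers interval resets and identity, and the paper's own gadgets are engineered precisely so that copies are never needed: the stored value is always a difference of two rate-$1$ variables, one of which equals $0$ at every round boundary, so every manipulation reduces to ``reset some variables to $0$ when a diagonal guard fires.'' To close this, either state the dependence on $x'=y$ updates as an explicit assumption on the model, or rearrange the gadget so that at every phase boundary the value is carried by a single clock that was reset at the start of the phase (pairing it there with a fresh fast variable reset to $0$) --- which is essentially the design pressure that leads to the paper's encoding. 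The remaining claims in your sketch (gadget durations, synchronization of the zero test with the descending phase of the tick gadget, and the converse direction) are asserted rather than verified, but they do hold for the same reasons as in the negative-rate proof.
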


\begin{proof}
The proof is again by reduction from the halting problem for
two-counter machines. We describe the encoding of the counters and the
simulation of the instructions.

Given a counter $c$, we represent $c$ via two auxiliary counters
$c_{\mathrm{bot}}$ and $c_{\mathrm{top}}$ such that $v(c) =
v(c_{\mathrm{top}}) - v(c_{\mathrm{bot}})$. 

Incrementing and decrementing $c$ are achieved by 
incrementing either $c_{\mathrm{top}}$ or $c_{\mathrm{bot}}$. 
Zero-testing for $c$ corresponds to checking whether the two auxiliary
counters have the same value. Therefore, we do not need to simulate 
decrementation of a counter.

We encode the value of counter $c_{\mathrm{bot}}$ using two
real-valued variables $x$ and $y$, by postulating that $|x-y| =
\frac{1}{2^{v(c_{\mathrm{bot}})}}$. Both $x$ and $y$ have rate
$\dot{x} = \dot{y} = 1$ at all times and in all locations of the
hybrid automaton.
Incrementing $c_{\mathrm{bot}}$ now simply corresponds to halving the
value of $|x-y|$. In order to achieve this, we use two real-valued 
variables $z$ and $w$ with rate $\dot{z}=2$ and $\dot{w}=3$.

All operations are simulated in `rounds'. At the beginning of a round,
we require that the variables $x,y,z,w$ have respective value
$\frac{1}{2^{v(c_{\mathrm{bot}})}},0,0,0$. We first
explain how we merely \emph{maintain} the value of $c_{\mathrm{bot}}$
throughout a round:
\begin{enumerate}
\item Starting from the beginning of the round, let all variables
  evolve until $x=z$, which we detect via a diagonal
  constraint. Recall that $z$ evolves at twice the rate of $x$.
\item At that point, $x = \frac{2}{2^{v(c_{\mathrm{bot}})}}$ and 
$y = \frac{1}{2^{v(c_{\mathrm{bot}})}}$. Reset $x$ and $z$ to zero.

\item Now let all variables evolve until $y=z$, and reset $y$, $z$ and
  $w$ to zero. It is easy to see that all variables now have exactly the
  same values as they had at the beginning of the round. Moreover, the
  invariant $|x-y| = \frac{1}{2^{v(c_{\mathrm{bot}})}}$ is maintained throughout.
\end{enumerate}
Note that the total duration of the above round is 
$\frac{2}{2^{v(c_{\mathrm{bot}})}}$.
To \emph{increment} $c_{\mathrm{bot}}$, we proceed as
follows:

\begin{enumerate}
\item[$1'$.] Starting from the beginning of the round, let all variables
  evolve until $x=w$. Recall that the rate of $w$ is three times that
  of $x$.
\item[$2'$.] At that point, $x = \frac{1.5}{2^{v(c_{\mathrm{bot}})}}$ and 
$y = \frac{0.5}{2^{v(c_{\mathrm{bot}})}} =
  \frac{1}{2^{v(c_{\mathrm{bot}})+1}}$. 
Reset $x$, $z$, and $w$ to zero.

\item[$3'$.] Now let all variables evolve until $y=z$, and reset $y$, $z$ and
  $w$ to zero. We now have $x = \frac{1}{2^{v(c_{\mathrm{bot}})+1}}$,
  and thus the value of $|x-y|$ has indeed been halved as required.
\end{enumerate}
Note that the total duration of this incrementation round is 
$\frac{1}{2^{v(c_{\mathrm{bot}})}}$, where $v(c_{\mathrm{bot}})$ denotes 
the value of counter $c_{\mathrm{bot}}$ prior to incrementation.

Clearly, the same operations can be simulated for counter
$c_{\mathrm{top}}$ (using further auxiliary real-valued variables).
Note that the durations of the rounds for $c_{\mathrm{bot}}$ and
$c_{\mathrm{top}}$ are in general different---in fact
$c_{\mathrm{bot}}$-rounds are never faster than
$c_{\mathrm{top}}$-rounds.  But because they are powers of
$\frac{1}{2}$, it is always possible to synchronize them, simply by
repeating maintain-rounds for $c_{\mathrm{bot}}$ until the round for
$c_{\mathrm{top}}$ has completed.

Finally, zero-testing the original counter $c$ (which corresponds to
checking whether $c_{\mathrm{bot}} = c_{\mathrm{top}}$) is achieved by
checking whether the corresponding variables have the same value at
the very beginning of a $c_{\mathrm{bot}}$-round (since the
$c_{\mathrm{bot}}$- and $c_{\mathrm{top}}$-rounds are then
synchronized).

We simulate the second counter~$d$ of the machine using further auxiliary counters
$d_{\mathrm{bot}}$ and $d_{\mathrm{top}}$. It is clear that the time
required to simulate one instruction of a two-counter machine is
exactly the duration of the slowest round. Note however that since
counters $c_{\mathrm{bot}}$, $c_{\mathrm{top}}$, $d_{\mathrm{bot}}$,
and $d_{\mathrm{top}}$ are never decremented, the duration of the
slowest round is 
at most $\frac{2}{2^p}$, where $p$ is the
smallest of the initial values of $c_{\mathrm{bot}}$ and
$d_{\mathrm{bot}}$. If a two-counter machine has an accepting run of
length $m$, then the total duration of the simulation is at most~$\frac{2m}{2^p}$.

In order to bound this value, it is necessary before commencing the
simulation to initialize the counters $c_{\mathrm{bot}}$,
$c_{\mathrm{top}}$, $d_{\mathrm{bot}}$, and $d_{\mathrm{top}}$ to a
sufficiently large value, for example any number greater than
$\log_2(m) + 1$. In this way, the duration of the simulation is at
most 1.

Initializing the counters in this way is straightforward.  Starting
with zero counters (all relevant variables are zero) we repeatedly
increment $c_{\mathrm{bot}}$, $c_{\mathrm{top}}$, $d_{\mathrm{bot}}$,
and $d_{\mathrm{top}}$ a nondeterministic number of times, via a
self-loop.  When each of these counters has value $k$, we can
increment all four counters in a single round of duration
$\frac{1}{2^k}$ as explained above.  So over a time period of duration
at most $\sum_{k=0}^\infty \frac{1}{2^k}=2$ the counters can be
initialized to $\lceil \log_2(m) + 1 \rceil$.

Let us now combine these ingredients. Given a two-counter
machine $M$, we construct a hybrid automaton $\H_M$ such that $M$ has
an accepting run iff $\H_M$ has a run of duration at most~3 that reaches 
the final state $\goal$.

$\H_M$ uses the real-valued variables described above to encode the
counters of $M$.  In the initialization phase, $\H_M$
nondeterministically assigns values to the auxiliary counters, hence
guessing the length of an accepting run of~$M$, and then proceeds with
the simulation of $M$.  This ensures a correspondence between an
accepting run of~$M$ and a time-bounded run of $\H_M$ that reaches
$\goal$.\qed
\end{proof}

%



\clearpage

\appendix

\section{Constructions to Prove Proposition~\ref{prop:syntax}\label{sec:constr-prove-prop}}
In this section, we expose three constructions that allow to prove
Proposition~\ref{prop:syntax}. These three constructions have to be
applied successively, starting from an RHA with non-negative rates:
\begin{enumerate}
\item The first construction allows to remove the \emph{non-deterministic
  resets} while preserving time-bounded reachability.
\item The second construction allows to consider only runs where the
  variables are \emph{bounded by $1$}. Roughly speaking, it amounts to
  encode the integral parts of he variables in the locations and adapting
  the guards and invariants accordingly.
\item The third construction allows to consider \emph{strict runs}
  only.
\end{enumerate}

Throughout the section, we assume all the guards to be \emph{reduced},
i.e.: $(i)$ the same atom does not appear twice in the same guard,
$(ii)$ the only guard containing $\true$ is $\true$ and $(iii)$ the
only guard containing $\false$ is $\false$. Remark that any guard can
always be replaced by an equivalent reduced guard. For any valuation
$\val$, we denote by $\val[S/0]$ the valuation s.t. for any $x$:
$\val[S/0](x)=v(x)$ if $x\not\in S$ and $\val[S/0](x)=0$ otherwise.x

\subsection{First construction: deterministic resets}
Given an RHA $\cH$ we show how to construct an RHA $\cH'$ with only
deterministic resets such that $\cH$ is equivalent to $\cH'$ with
respect to reachability in the sense of
Proposition~\ref{prop:eliminate-nd-resets}.  The idea of the
construction is to replace non-deterministic resets in $\cH$ with
resets to $0$ in $\cH'$ and to compensate by suitably altering the
guards of subsequent transitions in $\cH'$.

Let $X=\{x_1,\ldots,x_n\}$ be a set of variables, $\mathcal{I}$ a set of
real intervals including the singleton $\{0\}$, let $g$ be a guard on
$X$, and let $\rho \in \mathcal{I}^n$ be an $n$-tuple of
intervals. (Intuitively $\rho(j)$ represents the interval in which
variable $x_j$ was last reset with $\rho(j)=\{0\}$ if $x_j$ has not yet been
reset.)  Then we inductively define $\adapt{g}{\rho}$ as follows:
\begin{eqnarray*}
\adapt{g_1 \wedge g_2}{\rho} & = & \adapt{g_1}{\rho} \wedge
\adapt{g_2}{\rho}\\
\adapt{x_j \in I}{\rho} & = & x_j \in (I-\rho(j)) \, .
\end{eqnarray*}
Here, given intervals $I,J \subseteq \mathbb{R}$, $I-J$ denotes the
interval $\{ x \mid \exists y\in I, z \in J: x+z=y\}$.

Let $\cH=(X,\loc,\edges,\rate',\invariants,\init)$ be a RHA.  We
construct a new RHA $\dreset{\cH}=(X,\loc',\edges',\rate,\invariants',\init')$
as follows.  Writing $\mathcal{I}$ for the set of intervals used in
variable resets in $\cH$, we have:

\begin{enumerate}
\item $\loc' = \loc \times \mathcal{I}^{|X|}$.
\item For each $\big(\ell,g,r,\ell'\big)\in \edges$ we have that
  $\big((\ell,\rho),g',r',(\ell',\rho')\big)\in \edges'$, where
  $g'=\adapt{g}{\rho}$; $r'(j) = \bot$ and $\rho'(j) = \rho(j)$ if
$r(j) = \bot$; $r'(j)=\{0\}$ and $\rho'(j)=r(j)$ if $r(j) \neq \bot$.
\item $\rate'(\ell,\rho)=\rate(\ell)$.
\item $\invariants'(\ell,\rho)=\adapt{\invariants'(\ell)}{\rho}$.
\item $\init' = \{ (\ell,\mathbf{0}) \mid \ell \in \init\}$, where
  $\mathbf{0}=(\{0\},\ldots,\{0\})$.
\end{enumerate}

\begin{proposition}\label{prop:eliminate-nd-resets}
Let $\ell$ be a location of $\cH$.  Then, $\cH$ admits a
$\tb$-time-bounded run reaching $\ell$ \textbf{iff} $\dreset{\cH}$
admits a $\tb$-time-bounded run reaching some location of the
form~$(\ell,\rho)$.
\end{proposition}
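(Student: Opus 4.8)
The plan is to establish both directions through an explicit correspondence between runs of $\cH$ and runs of $\dreset{\cH}$, organised as a simulation relation in each direction. Define $R$ over states by $(q,\val)\mathrel{R}((q,\rho),\val')$ iff $\val(x_j)-\val'(x_j)\in\rho(j)$ for every $1\le j\le|X|$. Intuitively, a run of $\cH$ that performs a given sequence of (possibly non-deterministic) resets is matched by the run of $\dreset{\cH}$ that records in its location the sequence of reset \emph{intervals} used, resets every variable to $0$, and maintains $\val'(x_j)=\val(x_j)-a_j$, where $a_j\in\rho(j)$ is the value actually chosen at the last reset of $x_j$ in $\cH$ (and $a_j=0$, $\rho(j)=\{0\}$ before the first reset of $x_j$). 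I would first check that $R$ relates the initial states $(\init,\initval)$ and $((\init,\mathbf{0}),\initval)$ and that $R$-related states carry the same location in the first component; then the two implications reduce to showing that $R$ is a simulation in each direction that moreover preserves the duration of each transition, hence the global duration and the time bound~$\tb$.

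For the left-to-right implication I would prove that $\dreset{\cH}$ simulates $\cH$ via $R$. Given $(q,\val)\mathrel{R}((q,\rho),\val')$ with witnesses $a_j\in\rho(j)$, a continuous step of $\cH$ of duration $t$ with rate vector $\vec r$ is matched by the continuous step of $\dreset{\cH}$ with the same $t$ and $\vec r$, and an edge step of $\cH$ along $(q,g,r,q')$ is matched by the edge $((q,\rho),\adapt{g}{\rho},r',(q',\rho'))$ of $\dreset{\cH}$. Both matchings preserve $R$: for a rectangular atom $x_j\in I$ appearing in $g$ or in $\invariants(q)$, from $\val'(x_j)+a_j\in I$ and $a_j\in\rho(j)$ we get $\val'(x_j)\in I-\rho(j)$, which is exactly the atom of $\adapt{g}{\rho}$ (resp. of $\adapt{\invariants(q)}{\rho}$), so the guards and invariants of $\dreset{\cH}$ hold along the matched transition; after an edge step, a variable $x_j$ with $r(j)\neq\bot$ gets a new value $v\in r(j)$ in $\cH$, value $0$ in $\dreset{\cH}$ and recorded interval $\rho'(j)=r(j)$, so the new witness $v\in\rho'(j)$ works, while a variable with $r(j)=\bot$ keeps its witness. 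Unwinding this simulation along a $\tb$-time-bounded run of $\cH$ reaching $\ell$ yields a $\tb$-time-bounded run of $\dreset{\cH}$ reaching some location $(\ell,\rho)$.

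The right-to-left implication is the delicate part and, I expect, the main obstacle. The subtlety is that $\adapt{x_j\in I}{\rho}$ is the atom $x_j\in I-\rho(j)=\bigcup_{a\in\rho(j)}(I-a)$, which is an over-approximation: along a run of $\dreset{\cH}$ the witness $a$ realising this atom may differ at each transition, whereas in $\cH$ the value to which $x_j$ was last reset is fixed until its next reset. To turn a $\tb$-time-bounded run $\tilde\rho$ of $\dreset{\cH}$ reaching $(\ell,\rho)$ into a run of $\cH$ reaching $\ell$, the plan is: cut $\tilde\rho$, for each variable $x_j$, into the maximal sub-runs between consecutive resets of $x_j$ (the recorded interval is constant on each of them, equal to $\{0\}$ on the first); on each sub-run choose a single value $a_j$ in that interval to which $x_j$ will be reset in $\cH$, required to be compatible with every atom $x_j\in I$ tested in a guard or an invariant along the sub-run, i.e.\ $a_j\in\rho'(j)\cap\bigcap_m(I_m-v_m)$ where $v_m$ is the value of $x_j$ in $\tilde\rho$ at the $m$-th such test and $I_m$ the corresponding interval. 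Since these are intervals of the real line, such an $a_j$ exists iff the sets pairwise intersect, so the crux is to show that $(I_1-v_1)\cap(I_2-v_2)\cap\rho'(j)\neq\emptyset$ for any two tests on the same sub-run; I would derive this from the monotonicity of $x_j$ along the sub-run (the rates are non-negative and $x_j$ is not reset there) together with the hypothesis that already $v_m\in I_m-\rho'(j)$, and crucially \emph{without changing the delays of $\tilde\rho$}, since otherwise the time bound could be exceeded. Once such reset values have been fixed, the run of $\cH$ is obtained by replaying the timed path of $\tilde\rho$ with each reset to $0$ replaced by the chosen value, and a verification symmetric to the previous paragraph (exchanging the roles of $I$ and $I-\rho(j)$) shows it is a legal $\tb$-time-bounded run reaching $\ell$. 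The normalisation of guards already assumed in the appendix, and possibly a restriction to a well-behaved family of reset intervals, are likely to be needed to make this last combinatorial claim go through.
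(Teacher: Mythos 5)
Your forward direction (simulate $\cH$ in $\dreset{\cH}$ by maintaining, for each variable, a witness $a_j\in\rho(j)$ with $\val(x_j)=\val'(x_j)+a_j$) is correct and is the easy half. The trouble is exactly where you suspected it, and it is fatal: the interval-intersection claim at the heart of your right-to-left direction is false, and monotonicity of $x_j$ cannot save it, because two equality tests on $x_j$ whose separation in time is pinned down by \emph{another} variable can demand two incompatible witnesses. Concretely, take $X=\{x,y\}$ with $\dot x=\dot y=1$ in every location, all invariants $\true$, and edges $e_1:\ell_0\to\ell_1$ with guard $\true$ and reset $x\in[0,1]$; $e_2:\ell_1\to\ell_2$ with guard $x=2\wedge y=2$; $e_3:\ell_2\to\ell_3$ with guard $x=4\wedge y=3$. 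In $\cH$ the guard of $e_2$ forces the global time to be $2$ and hence pins the reset value of $x$ to equal the time at which $e_1$ was taken; between $e_2$ and $e_3$ the constraint $y=3$ forces exactly one time unit to elapse, after which $x=3\neq4$, so $\ell_3$ is unreachable in $\cH$. In $\dreset{\cH}$ the adapted guards are $x\in[1,2]\wedge y=2$ and $x\in[3,4]\wedge y=3$, and the run that fires $e_1$ at time $0$, waits $2$, fires $e_2$, waits $1$ and fires $e_3$ reaches $(\ell_3,\rho)$ with $\duration{\cdot}=3\le\tb$. At the first test the witness must be $a=0$, at the second it must be $a=1$, and indeed $\left([2,2]-2\right)\cap\left([4,4]-3\right)=\{0\}\cap\{1\}=\emptyset$, so the intersection you need to be non-empty is empty.

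This is not merely a defect of your write-up: the example shows that $\dreset{\cH}$, as literally defined via the Minkowski difference $I-\rho(j)$, strictly over-approximates reachability, so the right-to-left implication cannot be established for this construction by any argument. (The paper states the proposition without proof, so there is no official argument to compare yours to; your attempt in fact surfaces a real gap in the paper.) Your closing hedge --- that some normalisation or restriction on reset intervals is needed --- is the right instinct: one must either repair the construction so that the non-deterministic offset is resolved once per reset rather than once per atom (e.g.\ by carrying the chosen offset in some form that forces a uniform witness), or prove the proposition only for automata already preprocessed so that a non-deterministically reset variable is never subjected to two independent punctual tests before its next reset. As written, the backward direction of your proof does not go through.
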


\subsection{Second construction: variables bounded by 1}
Next, we show, given an RHA $\cH$ \emph{with non-negative rates and
  deterministic resets}, how we can build an RHA $\cbound{\cH}$ with
the same properties, and s.t. we can decide time-bounded reachability
on $\cH$ by considering only the runs of $\cbound{\cH}$ with the
variables bounded by $1$.

The idea of the construction is to encode the integer part of the
variable values of $\cH$ in the locations of $\cbound{\cH}$, and to
keep the fractional part (thus, a value in $[0,1]$) in the
variable. To achieve this, locations of $\cbound{\cH}$ are of the form
$(\ell, \mathbf{i})$, where $\ell$ is a location of $\cH$, and
$\mathbf{i}$ is a function that associates a value from
$\{0,\ldots,\cmax\}$ to each variable. Intuitively, $\mathbf{i}(j)$
represents the integer part of $x_j$ in the original run of $\cH'$,
whereas the fractional part is tracked by $x_j$ (hence all the
variables stay in the interval $[0,1]$). For instance, the
configuration $(\ell,2.1,3.2)$ of $\cH$ is encoded by the
configuration $((\ell,(2,3)),0.1,0.2)$ of $\cbound{\cH}$. The
transitions of $\cbound{\cH}$ are adapted from the transitions of
$\cH$ by modifying the guards to take into account the integer part
encoded in the locations. This is achieved thanks to the {\sf Adapt}
function described hereunder. Finally, fresh transitions are added to
$\cbound{\cH}$ that allow to reset variables whose value reach $1$,
while properly adapting the information about the integral part.

Let $X=\{x_1,\ldots,x_n\}$ be a set of variables, let $g$ be a guard on
$X$, and let $\mathbf{i}=(i_1,\ldots,i_n) \in \IN^n$ be a tuple of
natural values. Then, we define inductively
$\adapt{g}{\mathbf{i}}$ as follows:

$$
  \adapt{x_j \le k}{\mathbf{i}}=
  \begin{cases}
    \false & \text{if } k < i_j\\
    x_j = 0 & \text{if } k=i_j\\
    \true & \text{if } k>i_j\\
  \end{cases};
$$

$$
  \adapt{x_j < k}{\mathbf{i}}=
  \begin{cases}
    \false & \text{if } k \leq i_j\\
    x_j < 1 & \text{if } k=i_j+1\\
    \true & \text{if } k>i_j+1\\
  \end{cases};
$$

$$
  \adapt{x_j = k}{\mathbf{i}}=
  \begin{cases}
   \false & \text{if } k < i_j\\
    x_j=0 & \text{if } k=i_j\\
    \false & \text{if } k>i_j\\
  \end{cases};
$$

$$
  \adapt{x_j \ge k}{\mathbf{i}}=
  \begin{cases}
    \false & \text{if } k > i_j+1\\
    x_j = 1 & \text{if } k=i_j+1\\
    \true & \text{if } k\leq i_j\\
  \end{cases};
$$

$$
  \adapt{x_j > k}{\mathbf{i}}=
  \begin{cases}
    \true & \text{if } k < i_j\\
    x_j > 0 & \text{if } k=i_j\\
    \false & \text{if } k>i_j\\
  \end{cases}.
$$

$\adapt{g_1 \wedge g_2}{\mathbf{i}}=\adapt{g_1}{\mathbf{i}} \wedge
\adapt{g_2}{\mathbf{i}}$

Given an RHA $\cH=(X,\loc,\edges,\rate,\invariants,\init)$ s.t. for
any $(\ell, g, r, \ell')\in\edges$, for any $x\in X$: $r(x)$ is either
$[0,0]$ or $\bot$ (that is, all the resets are deterministic and to
zero), we build the RHA
$$\cbound{\cH}=(X,\loc',\edges',\rate',\invariants',\init')$$ as
follows (where $\cmax$ is the largest constant appearing in $\cH$):
\begin{enumerate}
\item $\loc'=\loc\times \{0,\ldots,\cmax\}^n$.
\item  For each $\big(\ell,g,r,\ell'\big)\in \edges$ we have that:
\begin{align*}
  &
  \big((\ell,\mathbf{i}),\adapt{g}{\mathbf{i}},r,(\ell',\mathbf{i'})\big)\in
  \edges', \text{ where } i'_j =
    \begin{cases}
      i_j & \text{if }r(x_j)\neq\bot\\
      0 & \text{otherwise}.
    \end{cases}\\
&
\big((\ell,\mathbf{i}),x_k=1,\{x_k\},(\ell,\mathbf{i'})\big)\in \edges', \text{ where }
      i'_j =
    \begin{cases}
      i_j & \text{if } j \ne k \\ 
      \min(i_j+1,\cmax) & \text{if } j = k.
      \end{cases}
\end{align*}

\item for any $(\ell,i)\in\loc'$: $\rate(\ell,i)=\rate(\ell)$.
\item $\invariants'(\ell,i) = (x_1 \le 1) \wedge \cdots \wedge (x_n
  \le 1)$, for each $(\ell,i)\in\loc'$.
\item $\init'=\big\{(\ell,i)\mid \ell\in\init\big\}$. 
\end{enumerate}

\begin{proposition}\label{prop:bound-construction}
  Let $\cH$ be an RHA with non-negative rates, and s.t. for any edge
  $(\ell, g, r, \ell')$ of $\cH$, for any variable $x$ of $\cH$: $r(x)$
  is either $[0,0]$ or $\bot$. Let $\ell$ be a location of
  $\cH$. Then, $\cH$ admits a $\tb$-time-bounded run reaching $\ell$
  \textbf{iff} $\cbound{\cH}$ admits a $1$-variable-bounded and
  $\tb$-time-bounded run reaching some location of the form
  $(\ell,\mathbf{i})$.
\end{proposition}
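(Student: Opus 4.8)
\medskip
\noindent\textbf{Proof plan.}
The plan is to prove the equivalence by a two-way simulation organised around an explicit correspondence between states. Say that a state $(\ell,\val)$ of $\cH$ \emph{corresponds} to a state $\big((\ell,\mathbf i),\val'\big)$ of $\cbound{\cH}$ when, for every variable $x_j$, we have $\val'(x_j)\in[0,1]$ and either $\mathbf i(j)<\cmax$ and $\val(x_j)=\mathbf i(j)+\val'(x_j)$, or $\mathbf i(j)=\cmax$ and $\val(x_j)\geq\cmax$. Intuitively $\val'$ holds the fractional parts and $\mathbf i$ the integer parts (capped at $\cmax$) of the variables of the corresponding run of $\cH$. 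Note that a state of $\cbound{\cH}$ corresponding to a state of $\cH$ is automatically $1$-variable bounded, and that $\big((\init,(0,\dots,0)),\initval\big)$ corresponds to $(\init,\initval)$.

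The technical core is an \emph{$\mathsf{Adapt}$-correctness lemma}: if $(\ell,\val)$ corresponds to $\big((\ell,\mathbf i),\val'\big)$ then $\val\models g$ iff $\val'\models\adapt{g}{\mathbf i}$ for every guard $g$, and likewise for the invariants. This is proved by induction on $g$; the conjunction case is immediate and the base case is a finite inspection of the five possible forms of a rectangular atom $x_j\sim k$, each mirrored exactly by one of the five clauses defining $\mathsf{Adapt}$. I expect this lemma to be where essentially all the difficulty lies --- not because it is conceptually hard, but because one must check that the boundary clauses of $\mathsf{Adapt}$ agree with the $\min(\cdot,\cmax)$ capping performed by the integer-increment edges of $\cbound{\cH}$. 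This relies on $\cmax$ being large enough that all variable values $\geq\cmax$ are indistinguishable by the guards of $\cH$; ensuring this (which may require taking $\cmax$ just above the largest constant of $\cH$) and matching it with the capping clauses is the delicate off-by-one point, especially for atoms $x_j=k$ and $x_j\leq k$ with $k$ near $\cmax$.

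For the $(\Rightarrow)$ direction I would take a $\tb$-time-bounded run $\rho$ of $\cH$ reaching $\ell$ and build a corresponding run of $\cbound{\cH}$ inductively. An edge step of $\rho$ is replayed on the matching edge $\big((\ell,\mathbf i),\adapt{g}{\mathbf i},r,(\ell',\mathbf i')\big)$ of $\cbound{\cH}$, which is enabled by the $\mathsf{Adapt}$-correctness lemma. A continuous step of duration $t$ is replayed with the same delay $t$, but split at the finitely many instants at which some variable reaches an integer value (finiteness because rates are non-negative and constant within a single step); at each such instant one inserts the zero-delay edge $\big((\ell,\mathbf i),x_j=1,\{x_j\},(\ell,\mathbf i')\big)$, which changes neither the location nor the elapsed time and re-establishes the correspondence with the fractional part back in $[0,1)$. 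The resulting run has the same duration, keeps every variable in $[0,1]$, hence is $1$-variable bounded, and reaches some location $(\ell,\mathbf i)$.

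For the $(\Leftarrow)$ direction I would take a $1$-variable-bounded, $\tb$-time-bounded run $\rho'$ of $\cbound{\cH}$ reaching $(\ell,\mathbf i)$ and project it back to $\cH$: each integer-increment edge is erased and the two continuous steps it separates (same location, same rate) are merged into one, which is legitimate because rectangular invariants are convex; every other edge is mapped to its $\cH$-preimage, again enabled by the $\mathsf{Adapt}$-correctness lemma. An induction along $\rho'$ shows the correspondence is maintained, so the projection is a $\tb$-time-bounded run of $\cH$ reaching $\ell$, which, together with the forward direction, gives the equivalence. Finally I would note that invariants are handled throughout exactly as guards, so that both simulations preserve them too. \qed
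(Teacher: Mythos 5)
The paper states this proposition without proof, so there is nothing to compare line by line; your two-way simulation built on an explicit correspondence relation and an $\mathsf{Adapt}$-correctness lemma is precisely the argument the construction is designed to support, and it is essentially sound. You are right to single out the $\cmax$ cap as the delicate point: with $\cmax$ equal to the largest constant of $\cH$ (as the paper literally has it), a variable that has wrapped past $\cmax+1$ sits at fractional part $0$ with $\mathbf{i}(j)=\cmax$, so $\adapt{x_j=\cmax}{\mathbf{i}}=(x_j=0)$ is spuriously satisfied; taking $\cmax$ strictly above the largest constant, as you propose, makes all values $\geq\cmax$ indistinguishable by the guards of $\cH$ and repairs this. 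One boundary point you should tighten: as stated, your $\mathsf{Adapt}$-correctness lemma fails when some fractional part equals exactly $1$, which your correspondence permits --- e.g.\ $\val(x_j)=\mathbf{i}(j)+1$ satisfies $x_j=\mathbf{i}(j)+1$ in $\cH$ while $\adapt{x_j=\mathbf{i}(j)+1}{\mathbf{i}}=\false$. This is harmless the way you use it, because in the forward simulation you fire the wrap-around edge $x_k=1$ before replaying any $\cH$-edge at the same instant, and in the backward direction every adapted atom that is satisfied at $\val'(x_j)=1$ does imply the corresponding original atom; but the lemma should either be stated for fractional parts in $[0,1)$ or split into one implication per simulation direction. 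Finally, note that the paper's $\cbound{\cH}$ replaces the invariants by $x_1\le 1\wedge\cdots\wedge x_n\le 1$ rather than also conjoining $\adapt{\invariants(\ell)}{\mathbf{i}}$; your closing remark that invariants are handled exactly as guards silently assumes the adapted invariants are kept, which is in fact needed for the right-to-left direction (otherwise $\cbound{\cH}$ admits runs whose projections violate the invariants of $\cH$), so you are proving the corrected construction.
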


\subsection{Third construction: strictly elapsing time}
Last, we explain how we can build an RHA that enforces \emph{strictly
  elapsing time}. Given an RHA
$\cH=(X,\loc,\edges,\rate,\invariants,\init)$ s.t. for any $(\ell, g,
r, \ell')\in\edges$, for any $x\in X$: $r(x)$ is either $[0,0]$ or
$\bot$, we build the RHA
$$\strict{\cH}=(X,\loc',\edges',\rate',\invariants',\init')$$ as
follows. Let $\Pi$ be the (finite) set of all non-empty paths of $\cH$
that contains at most one occurrence of each simple loop. Then:
\begin{enumerate}
\item $\loc'=\loc\times\Pi$
\item $\big((\ell,\pi), g,r,(\ell',\pi')\big)\in\edges'$ iff:
  \begin{itemize}
  \item $\pi=(\ell,g_1,r_1,\ell_1)(\ell_1,g_2,r_2,\ell_2)\ldots(\ell_{n-1},g_n,r_n,\ell')$
  \item $g=\bigwedge_{i=0}^n g_i[X_i/0]$, where $X_i=\{x\mid \exists
    0\leq j<i: r_j(x)\neq\bot\} $
  \item $r$ is s.t. for any $x\in X$: $r(x)=0$ if there is $1\leq
    j\leq n$ s.t. $r(j)\neq \bot$, and $r(x)=\bot$ otherwise.
  \end{itemize}
\item $\rate'$ is s.t. $\rate'(\ell, \pi)=\rate(\ell)$ for any
  $(\ell,\pi)\in\loc'$.
\item $\invariants'$ is s.t.:
  $\invariants'(\ell,\pi)=
  \invariants(\ell)\wedge\bigwedge_{i=1}^n\invariants(\ell_i)[X_i/0]$
  where $X_i=\{x\mid \exists
      0\leq j\leq i: r_j(x)\neq\bot\}$
\item $\init'=\{(\ell,\pi)\mid \ell\in\init\}$.
\end{enumerate}

\begin{proposition}\label{prop:strict-construction}
  Let $\cH$ be an RHA with non-negative rates and s.t. for any edge
  $(\ell, g, r, \ell')$ of $\cH$, for any variable $x$ of $\cH$: $r(x)$
  is either $[0,0]$ or $\bot$. Let $\ell$ be a location of
  $\cH$. Then, $\cH$ admits a $1$-variable-bounded and $\tb$-time-bounded
  run reaching $\ell$ \textbf{iff} $\strict{\cH}$ admits a
  \emph{strict}, $1$-variable-bounded and $\tb$-time-bounded run reaching
  some location of the form~$(\ell,\pi)$.
\end{proposition}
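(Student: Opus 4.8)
The plan is to set up a two-way correspondence in which a single super-edge $((\ell,\pi),g,r,(\ell',\pi'))$ of $\strict{\cH}$ plays the role of ``delay a while in $\ell$, then traverse the path $\pi$ of $\cH$ instantaneously'', i.e.\ with zero delay between the edges of $\pi$. Everything rests on two elementary facts about such instantaneous traversals. First, since no time elapses a variable can change only through a reset, and since every reset is to $0$ a reset is idempotent; hence the state reached after traversing $\pi$ is obtained from the state before $\pi$ by zeroing exactly the variables reset somewhere along $\pi$ -- which is precisely the effect of $r$ -- and, in particular, re-traversing a cycle of $\pi$ leaves the state unchanged. Second, a guard $g_i$ occurring in $\pi$ after the variables in $X_i$ have been reset holds iff $g_i[X_i/0]$ holds at the state where $\pi$ starts; so $g=\bigwedge_i g_i[X_i/0]$ faithfully encodes feasibility of the whole instantaneous traversal, and likewise the conjuncts $\invariants(\ell_i)[X_i/0]$ in $\invariants'(\ell,\pi)$ encode the invariants of the intermediate locations of $\pi$.

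For the ``if'' direction, given a strict, $1$-variable-bounded, $\tb$-time-bounded run $\rho'=s_0',(T_0,E_0),s_1',\ldots,(T_{m-1},E_{m-1}),s_m'$ of $\strict{\cH}$ reaching $(\ell,\pi)$, I replace each super-edge $E_j$ -- associated with some path $\pi^{(j)}$ of $\cH$ -- by the edges of $\pi^{(j)}$ fired in order with zero delays in between, keeping the delays $T_j$. An induction along the run shows this is a genuine run of $\cH$: each guard $g_i$ of $\pi^{(j)}$ is met because $g=\bigwedge_i g_i[X_i/0]$ held at the end of the delay $T_j$ and only $X_i$-variables, now $0$, have changed since; each intermediate invariant is met because $\invariants'(\ell^{(j)},\pi^{(j)})$ held throughout $T_j$ and hence at its end; and by the first fact above the state reached after $\pi^{(j)}$ equals the one reached after $E_j$. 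This run reaches $\ell$, has the same duration as $\rho'$, and is $1$-variable-bounded since its new intermediate states have coordinates that are either unchanged or~$0$. This direction is unconditional: $\strict{\cH}$ simply demands \emph{more} than $\cH$.

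For the ``only if'' direction, starting from a $1$-variable-bounded, $\tb$-time-bounded run $\rho$ of $\cH$ reaching $\ell$, I cut its timed path at every positive delay, obtaining maximal blocks $\sigma_0,\sigma_1,\ldots$ of edges separated by positive delays, each traversed instantaneously, the delay before $\sigma_j$ being positive except possibly for $\sigma_0$. A block need not lie in $\Pi$, but replacing each $\sigma_j$ by $\contractstar{\sigma_j}$, which contains at most one occurrence of each simple cycle and hence lies in $\Pi$, preserves the run: each elementary step of $\contractname^*$ merely deletes a re-traversal of a simple cycle, and by the first fact above this changes no state along the run, so all downstream guards and invariants are unaffected. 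I thus obtain a run of $\cH$, still $1$-variable-bounded, $\tb$-time-bounded and reaching $\ell$, all of whose blocks lie in $\Pi$. I then read it as a run of $\strict{\cH}$, replacing each fragment ``$\xrightarrow{T_j}\sigma_j$'' by the delay $T_j$ spent in $(\src{\sigma_j},\sigma_j)$ followed by the super-edge associated with $\sigma_j$, whose target's second component is chosen equal to $\sigma_{j+1}$ (any element of $\Pi$ for the last block). Since $T_j>0$ for every $j\ge1$ this run is strict, its duration equals that of $\rho$, and it reaches a location $(\ell,\cdot)$.

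The step that needs real care -- and the only genuine use of non-negativity of the rates -- is checking, in the ``only if'' direction, that the \emph{strengthened} invariant $\invariants'(\ell^{(j)},\sigma_j)=\invariants(\ell^{(j)})\wedge\bigwedge_i\invariants(\ell_i)[X_i/0]$ holds \emph{throughout} the delay $T_j$, whereas $\rho$ only guarantees the conjuncts $\invariants(\ell_i)[X_i/0]$ at the \emph{end} of $T_j$, the instant $\sigma_j$ is fired. Here one uses that in the automata produced by the $\cbound{\cdot}$ construction every invariant is a conjunction of upper bounds $x\le1$: such a conjunct, true at the end of a delay, is true throughout it because every variable is non-decreasing along a delay; and $\invariants(\ell^{(j)})$ is already respected throughout $T_j$ in $\rho$ itself. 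I expect this invariant-timing point -- together with carefully justifying that contracting an instantaneous block really preserves the run -- to be the only non-routine ingredient.
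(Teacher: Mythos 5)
The paper states Proposition~\ref{prop:strict-construction} without proof, so there is no official argument to compare yours against; judged on its own terms, your proof is correct and identifies exactly the points that need care. The three load-bearing observations are all present and right: an instantaneous traversal of a path whose resets are all to $0$ acts on the state by zeroing precisely the variables reset along it, so the super-edge's reset $r$ and substituted guard $\bigwedge_i g_i[X_i/0]$ encode it faithfully; deleting a repeated simple cycle from an instantaneous block changes no state downstream because the surviving occurrence of that cycle performs the same resets (this is what lets you push each block into the finite set $\Pi$ via $\contractstar{\cdot}$, and it is worth stating explicitly, as you do, since it is the only reason the construction's restriction to $\Pi$ is harmless); and the genuinely delicate step is that $\invariants'(\ell,\pi)$ must hold \emph{throughout} the delay preceding the super-edge, whereas the original run only certifies the intermediate invariants at the instant the block is fired.

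Your resolution of that last point is the one place where you prove slightly less than the literal statement: you use that every invariant is a conjunction of constraints $x\le 1$, which is not among the proposition's hypotheses but is guaranteed for the automata produced by $\cbound{\cdot}$, the only automata to which the proposition is applied. This is a defect of the statement rather than of your argument: with a lower-bound invariant the claim is false. For instance, take a single variable $x$ with rate $1$ everywhere, edges $\ell_0\to\ell_1\to\ell_2$ both guarded by $x=1$ with no resets, $\invariants(\ell_1)=(x\ge 1)$ and all other invariants $\true$. The run that waits one time unit in $\ell_0$ and crosses both edges instantaneously is $1$-variable-bounded, $1$-time-bounded and reaches $\ell_2$; but in $\strict{\cH}$ the super-edge for the two-edge path puts the invariant $x\ge 1$ on $(\ell_0,\pi)$, which fails at the start of the delay, while splitting into two super-edges forces a strictly positive delay in $\ell_1$ that falsifies the equality guard $x=1$. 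So your proof should be read as establishing the proposition for the class of automata it is actually invoked on, and you were right to single this out as the crux.
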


\subsection{Proof of Proposition~\ref{prop:syntax}}
By applying successively the three constructions above to any RHA with
non-negative rates $\cH$, one obtain an RHA
$\cH'=\strict{\cbound{\dreset{\cH}}}$ that has the following
properties: 
\begin{enumerate}
\item $\cH'$ contains only \emph{deterministic resets} to zero
\item All the guards and invariants in $\cH'$ are either $\true$ or
  conjunctions of atoms of the form $x=1$ or $y<1$
  only\footnote{Remark that the third construction  removes from
    the guards all the atoms of the form $x>0$ that are introduced by
    the second one.}. Moreover, each time a variable is tested to $1$ by
  an edge, it is reset to zero.
\end{enumerate}

Moreover, when the original $\cH$ contains no strict inequalities in
the guards and invariants, the same holds for the guards and
invariants of $\cH'$, i.e., they will all be either $\true$ or of the
form $x_1=1\wedge x_2=1\wedge\cdots\wedge x_k=1$ for $\{x_1,\ldots,
x_k\}\subseteq X$. Thus, $\cH'$ has the right syntax, and respects
${\sf H_1}$ through ${\sf H_3}$. Given a location $\ell$ of $\cH$, we
let $\goal$ bet the set of all $\cH'$ locations of the form
$(((\ell,\rho),\mathbf{i}),S)$. 
Thanks to
Proposition~\ref{prop:eliminate-nd-resets}, \ref{prop:bound-construction}
and
\ref{prop:strict-construction}, we are ensured that $\cH$ admits a
$\tb$-time-bounded run reaching $\ell$ iff $\cH'$ admits a strict
$1$-variable-bounded and $\tb$-time-bounded run reaching $\goal$. \qed

\end{document}